\documentclass[aps,prl,reprint,amsmath,amssymb,superscriptaddress,nobibnotes,showpacs,showkeys]{revtex4-1}

\pdfoutput=1

\usepackage{dcolumn}
\usepackage{bm}
\usepackage{amsmath}
\usepackage{amssymb,amsthm}
\usepackage{bbm}
\usepackage{epsfig,color}

\usepackage[sans]{dsfont}
\usepackage{comment}

\PassOptionsToPackage{hyphens}{url}
\usepackage{hyperref}
\usepackage{url}

\usepackage{units}
\usepackage{color}
\usepackage{ifthen}
\usepackage{graphicx}
\graphicspath{ {images/} }

\usepackage{mathrsfs}


\usepackage[utf8]{inputenc}
\usepackage{tikz}
\usetikzlibrary{automata,positioning}
\usetikzlibrary{fit,shapes.geometric}
\usetikzlibrary{decorations.pathmorphing}
\usetikzlibrary{arrows,matrix}
\usepackage{tikz-3dplot}
\usetikzlibrary{patterns}

\usepackage{mathtools}
\DeclarePairedDelimiter{\abs}{\lvert}{\rvert}

\definecolor{nblue}{rgb}{0.2,0.2,0.7}
\definecolor{ngreen}{rgb}{0.2,0.6,0.2}
\definecolor{nred}{rgb}{0.7,0.2,0.2}
\definecolor{nblack}{rgb}{0,0,0}

\DeclareMathOperator{\rank}{rank}

\newcommand{\ket}[1]{|#1\rangle}
\newcommand{\ketbra}[2]{\ket{#1}\!\bra{#2}}
\newcommand{\bra}[1]{\langle#1|}

\newcommand{\tr}{\text{tr}}

\def\A{\mathcal{I}}

\def\I{\mathcal{I}}

\newcommand\pinv[1]{#1^{+}}

\newcommand{\N}{\mathcal{N}}
\newcommand{\E}{\mathcal{E}}
\newcommand{\Pol}{\mathcal{P}}
\newcommand{\x}{\textrm{x}}
\newcommand{\D}{\mathcal{D}}

\newcommand{\pguess}{P_\textrm{g}}

\newtheorem*{theorem*}{Theorem}

  \theoremstyle{definition}
  \newtheorem{defn}{\protect\definitionname}
  \theoremstyle{plain}
  
\theoremstyle{plain}
  \newtheorem{prop}{\protect\propositionname}
\theoremstyle{plain}

  \theoremstyle{plain}
  
  \theoremstyle{plain}

  \providecommand{\conjecturename}{Conjecture}
  \providecommand{\definitionname}{Definition}
  \providecommand{\lemmaname}{Lemma}
\providecommand{\corollaryname}{Corollary}
\providecommand{\theoremname}{Theorem}
\providecommand{\propositionname}{Proposition}

\def\x{\mathrm{x}}
\def\y{\mathrm{y}}

\def\ido{\mathrm{id}}

\def\E{\mathcal{E}}
\def\N{\mathcal{N}}

\def\OMP{\mathcal{OMP}}
\def\I{\mathcal{I}}

\def\id{\mathbb{I}}

\def\D{\mathcal{D}}

\def\tr{\mbox{tr}}

\def\bea{\begin{eqnarray}}
\def\eea{\end{eqnarray}}

\definecolor{CircleBlue}{RGB}{0, 163, 232}
\definecolor{CircleYellow}{RGB}{254, 242, 0}
\definecolor{CircleOrange}{RGB}{255, 127, 38}

\begin{document}

\title{Optimal measurement preserving qubit channels}

\author{Spiros Kechrimparis}
\email{skechrimparis@gmail.com}
\author{Joonwoo Bae}
\email{joonwoo.bae@kaist.ac.kr}

\affiliation{School of Electrical Engineering, Korea Advanced Institute of Science and Technology (KAIST), 291 Daehak-ro Yuseong-gu, Daejeon 34141 Republic of Korea,}
	
	\begin{abstract}
	
	We consider the problem of discriminating qubit states that are sent over a quantum channel and derive a necessary and sufficient condition for an optimal measurement 
	to be preserved by the channel. 
	We apply the result to the characterization of \emph{optimal measurement preserving} (OMP) channels for a given qubit ensemble, e.g.,  a set of two states or a set of multiple qubit states with equal \emph{a priori} probabilities.
	 Conversely, we also characterize qubit ensembles for which a given channel is OMP, such as unitary and depolarization channels. Finally, we show how the sets of OMP channels for a given ensemble can be constructed.
	

	\end{abstract}

	\maketitle


\section{Introduction}

In quantum information processing tasks, the problem of state discrimination often lies at their core. State discrimination can be simply described as a communication scenario between two parties: 
quantum states from a known ensemble are transmitted from a sender to a receiver through a quantum channel. The aim of the receiver is to perform measurements on the states in order to identify them. 
In \emph{minimum error discrimination} the success of the task is evaluated by the \emph{guessing probability}, the probability of guessing the received state correctly on average \cite{helstrom1969,bae2015,barnett2009}. It is known that, in general, this task cannot be performed without error and, due to this fact, state discrimination finds application in quantum information, quantum communication and quantum foundations\cite{bae2015, bae2013-2}. 


In order to achieve the guessing probability, a measurement apparatus has to be optimized to identify states  from a given ensemble. If the states in the ensemble are altered, so is, in general, the optimal measurement. 
From a practical point of view this poses a problem since in realistic scenarios unknown sources of noise inevitably exist. 
The noisy states would, then, need to be identified by means of state or channel tomography, both of which are experimentally costly.
Thus, it is important to characterize 
OMP channels, i.e. channels that preserve an optimal measurement for state discrimination.
 

In  \cite{kechrimparis2019} the notion of OMP channels has been introduced and a condition has been proven for the characterization of channels that do not change an optimal measurement for state discrimination of a given ensemble. Moreover, the depolarizing channel has been shown to be OMP for (i) equiprobable ensembles, and (ii) two-state ensembles \cite{kechrimparis2020,*kechrimparis2020a}.  Based on this observation a protocol has been proposed mapping an unknown channel into an OMP one by using LOCC only, saving the cost of performing channel or state tomography. Interestingly, this pre- and post-processing also increases the guessing probability for certain ensembles and channels.
However, the condition for the characterization of OMP maps in \cite{kechrimparis2019} is only sufficient and of limited use in the case of ensembles with unequal \emph{a priori} probabilities.

 The aim of this work is to completely characterize OMP channels for ensembles consisting of qubit states. To that end, we derive a necessary and sufficient condition for the preservation of an optimal measurement for the discrimination of qubit states. Beyond qubit ensembles, our condition becomes only sufficient but can be applied to general ensembles of unequal probabilities, in contrast with previous results.

The paper is structured as follows. We first review known results from the theory of state discrimination. We then present the main result, a necessary and sufficient condition for the preservation of an optimal measurement for the discrimination of states from a qubit ensemble, sent through a quantum channel. We discuss certain special cases, such as ensembles of equal \emph{a priori} probabilities and ensembles of two states only, and subsequently examine the OMP properties of depolarizing and unitary channels. Finally, we show how one can construct the general form of qubit channels that are OMP for a given ensemble and optimal measurement, and provide a number of examples.

\section{Preliminaries}
A \emph{quantum state} $\rho$ is described by a Hermitian, positive semidefinite operator of trace one acting on a Hilbert space $\mathcal{H}$.
An \emph{ensemble} $S$ of states is a collection of known states $\rho_\x$ that appear with \emph{a priori} probabilities $q_\x$. We denote such an ensemble of $n$ states with $S=\{q_\x,\rho_\x\}_{x=1}^n$. A measurement is represented by a positive operator-valued measure (POVM), a collection of positive semi-definite operators that sum to the identity operator, i.e., $M=\{M_\x\}_{\x=1}^n$, with $M_\x \geq 0$ and $\sum_\x M_\x =\id$. 

In minimum error discrimination one can assume, without loss of generality, that the optimal POVM has the same number of elements $n$ as the number of states in the ensemble, since some of these elements can be the null operator. However, if some of the operators in the POVM are the null operator, we keep track of the non-zero elements by considering the index set $\A$ that collects the indices of the operators that are strictly non-zero. In that case, we say that the states with indices not in $\A$ are \emph{not identified} by the measurement; in other words, these states are never detected by the measuring apparatus.

The goal of minimum error state discrimination is to find the  measurement strategy that minimizes the error in guessing the states of a known ensemble correctly, or equivalently, to find the strategy that maximizes the probability of guessing correctly on average.
The problem of minimum error state discrimination can be mathematically stated in the following way: given an ensemble of states $S=\{q_\x,\rho_\x\}_{\x=1}^{n}$, find the POVM $\{M_\x\}_{\x=1}^n$ that maximizes the guessing probability
\begin{equation}
\pguess =  \max_{\{M_\x\}}\, \, \sum_{\x=1}^{n}q_\x \tr[M_\x \rho_\x]\,,
\end{equation}
subject to the constraints $M_\x\geq 0$ and $\sum_\x M_\x =\id$.
Closed form solutions exist only in limited cases, such as a pair of qubit states, or ensembles of states with symmetries \cite{helstrom1969,bae2013,bae2015,barnett2009}.

A different formulation of the problem that has a geometric flavor is one based on the \emph{linear complementarity problem} (LCP) \cite{bae2013-2,bae2013}. In this approach, one is looking for a POVM $M=\{M_\x\}_{\x=1}^n$, a \emph{symmetry operator} $K$, \emph{complementary states} $\sigma_\x$ and non-negative numbers $r_\x$ that obey the so called \emph{Karush-Kuhn-Tucker(KKT)} conditions:
\begin{align}
& \, \, K=q_\x \rho_\x + r_\x \sigma_\x \,, \notag \\
& r_\x \tr[M_\x \sigma_\x] = 0 \quad \forall \x. \label{eq: geometric KKT}
\end{align}
Once these are identified, the guessing probability is obtained through
\begin{equation}
\pguess = \tr K =q_\x + r_\x \,, \quad \forall \x \,. \label{eq: pguess KKT}
\end{equation}
It is worth noting that the first of the KKT conditions can be re-written in the following form
\begin{equation}
q_\x \rho_\x - q_\y \rho_\y = r_\y \sigma_\y -r_\x\sigma_\x \,, \label{eq: 1st KKT geometric form}
\end{equation}
which unveils the geometric flavor of this approach. Indeed, let us first define the original polytope for the set of points constructed by multiplying each state with the respective \emph{a priori} probability $q_\x$, i.e.  $\Pol=\{q_\x \rho_\x\}_{\x=1}^n$, residing in the space of $2\times 2$ Hermitian matrices. Similarly we define the polytope of the complementary states, $\Pol_c=\{r_\x \sigma_\x\}_{\x=1}^n$. Then the meaning of the KKT condition in Eq.\@ \eqref{eq: 1st KKT geometric form} becomes clear: the original and complementary polytopes are congruent. We note that the complementary states $\sigma_\x$, the non-negative parameters $r_\x$, as well as the symmetry operator $K$ are unique. Having obtained $r_\x \,, \sigma_\x$ and using the second of the KKT conditions, Eq.\@ \eqref{eq: parameters to pguess}, one can obtain optimal measurements that solve the discrimination problem. However, note that an optimal measurement is not unique in general.

When it comes to qubit state discrimination, the symmetry operator $K$ determines the possible measurement strategies. Specifically, the following cases exist \cite{weir2017}:
\begin{enumerate}
	\item $K-q_j \rho_j=0$ for some $j$. Then, it follows that $K=q_j \rho_j$ and this can occur if $q_j \rho_j -q_k \rho_k \geq 0, \quad \forall k$. In such case the optimal strategy consists of not performing a measurement and always guessing state $\rho_j$, i.e. the POVM elements are $M_\x = \id \delta _{j \x}$.
	\item $K-q_j \rho_j>0$. If this operator is positive definite, the POVM element $M_j$ is the null operator in every optimal measurement. Thus, the state $\rho_j$ is never identified by any optimal measurement strategy.
	\item $K-q_j \rho_j$ has a single zero eigenvalue. Then, $M_j$ is a weighted projector, $M_j= w_j \ketbra{\phi_j}{\phi_j}$, where $0\leq w_j\leq 1$ and $\ket{\phi_j}$ is the eigenstate corresponding to the zero eigenvalue.
\end{enumerate}
These three cases cover all possibilities. For the problem we are considering in this work the first two cases are trivial and can be excluded, without loss of generality. For the first case this is obvious, since if the optimal strategy consists of no measurement, then there is no measurement to be preserved: regardless of the effect of the channel, one can always guess according to the \emph{a priori} probabilities. The second case can also be ignored since if a state in the ensemble is never identified by a measurement, one can remove that state from the ensemble and re-define the \emph{a priori} probabilities. Specifically, if $S=\{q_\x,\rho_\x\}_{x=1}^n$ is the original ensemble and state $\rho_1$, say, is never identified by an optimal measurement, then one can consider the ensemble $S^\prime = \{q^\prime_\x,\rho_\x\}_{x=2}^n$ , where $q^\prime_\x = \nicefrac{q_\x}{r}$ and $r=\sum_{j=2}^n q_j$. The two ensembles are equivalent from a state discrimination point of view, since they have the same optimal measurements and complementary states. Moreover, the symmetry operators, guessing probabilities, as well as parameters $r_\x , r^\prime_x$ are related by a rescaling, that is, $K^\prime =\nicefrac{K}{r}$, $\pguess^\prime = \nicefrac{\pguess}{r}$, and $r^\prime_\x = \nicefrac{r_\x}{r}$. After these considerations, without loss of generality we can assume that every state in the ensemble will be identified by some optimal measurement. 

\section{Preservation Of Optimal Measurements\label{sec: main results}}
Let us start with a few definitions before we present the main result.
\begin{defn}
	We call a channel $\N$ \emph{optimal measurement preserving} (OMP) for an ensemble of states $S=\{q_\x ,\rho_\x\}_{\x=1}^{n}$, if an optimal measurement before and after a channel use is the same. In other words, a POVM $M=\{M_\x\}_{\x=1}^{n}$ that solves the state discrimination problem for the ensemble $S=\{q_\x ,\rho_\x\}_{\x=1}^{n}$, also solves it for $S^{(\N)}=\{q_\x ,\N[\rho_\x]\}_{\x=1}^{n}$. Moreover, $\pguess$ denotes the guessing probability of the original ensemble and $\pguess^{(\N)}$ the guessing probability after the channel use.
\end{defn}

We note that optimal measurement preservation is a property involving three objects: (i) an ensemble of states, (ii) a quantum channel, and (iii) an optimal measurement for discrimination.

\begin{defn}
	We call $\delta_\N= \pguess-\pguess^{(\N)}$ the \emph{guessing degradation}, which quantifies the decrease in the guessing probability after the use of a channel $\N$ for the ensemble $S$.
\end{defn}

 With these definitions we are ready to state the main result.
\begin{theorem*}
	\label{thereom OMP}
	Let $S=\{q_\x ,\rho_\x\}_{\x=1}^{n}$ denote an ensemble of $n$ qubit states, $\sigma_\x$ the complementary states of $S$ and $M=\{M_\x\}_{\x=1}^{n}$ an optimal measurement that identifies the states of the ensemble with indices from an index set $\A$. Then, a channel $\N$ with guessing degradation $\delta_\N$ is OMP for the ensemble $S$ if and only if the following conditions are satisfied:
	\begin{align}
	q_\x \N[\rho_\x] - q_\y \N[\rho_\y] & = q_\x \rho_\x - q_\y \rho_\y + \delta_\N(\sigma_\x -\sigma_\y) \,, \notag \\
	r_\x & \geq \delta_\N\geq 0  \, \,, \quad \forall \x,\y \in \A \,. \label{eq: theorem OMP iff}
	\end{align}
\end{theorem*}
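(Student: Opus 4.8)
\emph{Approach.} The plan is to run the argument entirely through the KKT/LCP characterization of optimality recalled above. The one structural input special to qubits is this: for every $\x\in\I$ the optimal POVM element is a rank-one weighted projector $M_\x=w_\x\ketbra{\phi_\x}{\phi_\x}$ with $w_\x>0$ (the only non-excluded case), so the complementarity relation $r_\x\tr[M_\x\sigma_\x]=0$ together with $r_\x>0$ forces $\bra{\phi_\x}\sigma_\x\ket{\phi_\x}=0$; for a qubit state this determines $\sigma_\x$ as the pure state orthogonal to $\ket{\phi_\x}$. Hence $\sigma_\x$ is fixed by the measurement direction alone, which is exactly what makes the rigid equalities of Eq.~\eqref{eq: theorem OMP iff} possible.

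\emph{Necessity.} To prove the ``only if'' part, suppose $\N$ is OMP, so the same $M$ is optimal for $S^{(\N)}=\{q_\x,\N[\rho_\x]\}$; let $K^{(\N)},r^{(\N)}_\x\ge0,\sigma^{(\N)}_\x$ be its (unique) KKT data. Applying the rank-one observation again, for $\x\in\I$ with $r^{(\N)}_\x>0$ one gets $\sigma^{(\N)}_\x=\sigma_\x$ (the degenerate value $r^{(\N)}_\x=0$ being compatible with the same choice). Equating $\pguess=q_\x+r_\x$ and $\pguess^{(\N)}=q_\x+r^{(\N)}_\x$ via Eq.~\eqref{eq: pguess KKT} gives $r^{(\N)}_\x=r_\x-\delta_\N$, whose non-negativity is precisely $r_\x\ge\delta_\N$; and $\delta_\N\ge0$ holds for any channel, since composing the optimal POVM of $S^{(\N)}$ with the unital map $\N^\dagger$ yields a feasible POVM for $S$. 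Subtracting the first KKT relation $K^{(\N)}=q_\x\N[\rho_\x]+r^{(\N)}_\x\sigma_\x$ at two indices $\x,\y\in\I$ and substituting $q_\x\rho_\x-q_\y\rho_\y=r_\y\sigma_\y-r_\x\sigma_\x$ from Eq.~\eqref{eq: 1st KKT geometric form} then produces the first identity of Eq.~\eqref{eq: theorem OMP iff}.

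\emph{Sufficiency.} For the ``if'' part, given Eq.~\eqref{eq: theorem OMP iff} I would set $\sigma^{(\N)}_\x:=\sigma_\x$, $r^{(\N)}_\x:=r_\x-\delta_\N\ge0$ for $\x\in\I$, and $K^{(\N)}:=q_\x\N[\rho_\x]+(r_\x-\delta_\N)\sigma_\x$; the first condition combined with Eq.~\eqref{eq: 1st KKT geometric form} makes $K^{(\N)}$ independent of the chosen $\x\in\I$, hence a well-defined Hermitian operator with $\tr K^{(\N)}=\pguess-\delta_\N$. One then checks that $(K^{(\N)},r^{(\N)}_\x,\sigma^{(\N)}_\x)$ satisfy the KKT conditions for $S^{(\N)}$ with optimal measurement $M$: complementarity $r^{(\N)}_\x\tr[M_\x\sigma^{(\N)}_\x]=0$ is inherited since $M_\x\sigma_\x=0$ already holds (trivially for $\x\notin\I$, where $M_\x=0$), and $K^{(\N)}-q_\x\N[\rho_\x]=(r_\x-\delta_\N)\sigma_\x\ge0$ for $\x\in\I$ by the second condition. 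The sufficiency of the KKT conditions then makes $M$ optimal for $S^{(\N)}$, so $\N$ is OMP with $\pguess^{(\N)}=\tr K^{(\N)}=\pguess-\delta_\N$.

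\emph{Main obstacle.} Two steps carry the weight of the argument. The first is the identity $\sigma^{(\N)}_\x=\sigma_\x$: it is here that two-dimensionality is essential, since in higher dimension $\bra{\phi_\x}\sigma_\x\ket{\phi_\x}=0$ no longer determines $\sigma_\x$, and the clean equalities of Eq.~\eqref{eq: theorem OMP iff} are a genuinely qubit phenomenon. The second is the positivity $K^{(\N)}-q_\x\N[\rho_\x]\ge0$ for indices $\x\notin\I$ (where $M_\x=0$): this is required by the KKT conditions but is not among the stated identities, and must be secured separately — using the preliminary reductions (states never identified by any optimal measurement are removed from the ensemble) and the qubit geometry to ensure that these channeled, unmeasured states stay dominated by $K^{(\N)}$ — together with the degenerate boundary case $r^{(\N)}_\x=0$.
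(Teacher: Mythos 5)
Your proof is correct and follows essentially the same route as the paper's: necessity via the qubit-specific fact that the rank-one optimal POVM elements pin down the complementary states, so $\sigma_\x^{(\N)}=\sigma_\x$, followed by subtracting the two KKT chains and using $r_\x-r_\x^{(\N)}=\delta_\N$; sufficiency by verifying the KKT data $(K^{(\N)},r_\x-\delta_\N,\sigma_\x)$ directly. You are in fact somewhat more careful than the paper on two points it leaves implicit — why $\delta_\N\ge 0$ (your $\N^\dagger$ argument) and the positivity $K^{(\N)}-q_\x\N[\rho_\x]\ge 0$ for indices outside $\A$ — but these are refinements of, not departures from, the published argument.
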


\begin{proof}
	The sufficient part is straightforward. Assume that the conditions of the theorem are satisfied for the ensemble $S^{(\N)}$.  Then, they imply that
	\begin{align}
	q_\x \N[\rho_\x] - q_\y \N[\rho_\y] & = q_\x \rho_\x - q_\y \rho_\y + \delta_\N (\sigma_\x -\sigma_\y) \,, \notag \\
	& =  r_{\y} \sigma_{\y}  - r_{\x}\sigma_{\x}+ \delta_\N (\sigma_\x -\sigma_\y) \,, \notag \\
	& =  (r_{\y}-\delta_\N) \sigma_{\y}  - (r_{\x}-\delta_\N)\sigma_{\x} \,, \notag \\
	&\equiv  r_{\y}^{(\N)} \sigma_{\y} - r_{\x}^{(\N)}\sigma_{\x} \,. \label{eq: proof 1}
	\end{align}
	In the second equality we used the KKT conditions of the original ensemble, Eq.\@ \eqref{eq: 1st KKT geometric form}. 	The non-negativity of the $r_{\x}^{(\N)}$ is imposed by the second condition in the theorem.
	Eq.\@ \eqref{eq: proof 1} shows that the complementary states are the same for the two problems.
	Finally, it needs to be shown that the measurement is also the same for the two ensembles; this is, however, obvious since the complementary states are the same and thus the original measurement will satisfy the trace KKT condition: $\tr{\left[ M_\x \sigma_{\x}^{(\N)} \right]} = \tr{\left[ M_\x \, \sigma_{\x} \right]}  =0$ .
	
	It remains to establish the necessary part. Let us assume that the optimal measurement $M=\{M_\x\}_{\x=1}^n$ that identifies only states of the ensemble with indices $\A$, is preserved by the channel $\N$.
	 That means that if $M=\{M_\x\}_{\x=1}^n$ is a POVM that solves the discrimination problem for the original ensemble $S=\{q_\x ,\rho_\x\}_{\x=1}^{n}$, then the same POVM also solves the discrimination problem for the ensemble after the application of the channel $\N$, $S^{(\N)}=\{q_\x ,\N[\rho_\x]\}_{\x=1}^{n}$.
	 From qubit state discrimination  \cite{bae2013}, it is known that the POVM elements of an optimal measurement are necessarily weighted projectors, i.e. of the form $M_\x=w_\x \ketbra{\psi_\x^\perp}{\psi_\x^\perp}$, where $\sigma_\x=\ketbra{\psi_\x}{\psi_\x}$ are the complementary states of the original ensemble. Since the measurement operators have the aforementioned form, this immediately implies that the complementary states for the ensemble $S^{(\N)}$ are the same as the ones of the original one; that is, $\sigma^{(N)}_\x=\sigma_\x \,, \, \forall \x \in \A$. This follows from the second KKT condition:
	\begin{align}
	\tr{\left[ M_\x^{(\N)} \sigma_{\x}^{(\N)} \right]} = \tr{\left[ M_\x \, \sigma_{\x}^{(\N)} \right]}  =0 \,.
	\end{align}
	  It is worth noting that this last step holds true only in dimension two and fails in higher  dimensions; in the latter case it is possible for two ensembles to share an optimal measurement while their complementary states being different. Thus, the theorem becomes only sufficient if $\text{dim}>2$.

	To conclude the proof, let us write the KKT conditions subject to the constraint that the complementary states are the same for both problems and for any $\x\in \A$:
	\begin{align}
	q_\x \rho_\x - q_\y \rho_\y & = r_{\y} \sigma_{\y}  - r_{\x}\sigma_{\x} \,, \quad r_\x\geq 0 \notag \\
	q_\x \N[\rho_\x] - q_\y \N[\rho_\y] & =  r_{\y}^{(\N)} \sigma_{\y}  - r_{\x}^{(\N)}\sigma_{\x} \,,  \quad r_\x^{(\N)} \geq 0 \,. \label{eq: resulting KKT}
	\end{align}
	By subtracting the second equation from the first and noting that Eq\@ \eqref{eq: pguess KKT} implies
	\begin{align}
	r_\x-r_\x^{(\N)}=\pguess - \pguess^{(\N)} =\delta_\N \,, \label{eq: parameters to pguess} \quad \forall \x \,,
	\end{align}
	we obtain the condition in the theorem, Eq.\@ \eqref{eq: theorem OMP iff}. Moreover, from Eq.\@ \eqref{eq: parameters to pguess} and the constraint $r_\x^{(\N)} \geq 0$, we obtain the inequality constraint in the theorem, $r_\x \geq \delta_\N$.
\end{proof}

\vspace{1cm}
We note that the condition in Eq.\@ \eqref{eq: theorem OMP iff} can be concisely rewritten as  
\begin{equation}
\delta_\N (\sigma_\x - \sigma_\y)= (\N-\ido)(h_{\x\y}) \,,
\end{equation}
in terms of the \emph{Helstrom operator}, $h_{\x \y}\equiv q_\x \rho_\x - q_\y \rho_\y$, for each pair of states from the ensemble $S$ .

Let us compare the condition of the theorem with the previously derived OMP condition \cite{kechrimparis2019}:
\begin{align}
q_\x \N[\rho_\x] &- q_\y \N[\rho_\y]  = \kappa(q_\x \rho_\x - q_\y \rho_\y) \,, \notag \\
&  \quad \forall \x,\y \, \, \text{and} \, \,  \kappa  \in [0,1]  \,  \,. \label{eq: OMP old}
\end{align}
Note that by taking the trace on  both sides of last equation, the condition is self-consistent only if $\kappa=1$ for ensembles of unequal \emph{a priori} probabilities. This in turn implies that Eq.\@\eqref{eq: OMP old} can not be applied, in general, to ensembles of unequal probabilities. In addition, Eq.\@ \eqref{eq: OMP old} was only proven as a sufficient condition for a channel to be OMP. In contrast, both of these issues are addressed by the theorem, Eq.\@\eqref{eq: theorem OMP iff}. Indeed, taking the trace of both sides of Eq.\@\eqref{eq: theorem OMP iff} is always consistent, which extends the applicability of the previous OMP condition. Moreover, as it has been shown, the theorem is necessary and sufficient in dimension two, thus providing the full characterization of OMP channels. It remains valid as a sufficient condition in higher dimensions, with applicability to ensembles with unequal \emph{a priori} probabilities.

Let us introduce two notions of an OMP channel: (i) a \emph{strong} OMP channel that preserves all optimal measurements, and (ii) a \emph{weak} OMP channel that preserves only some of the optimal measurements or even just one of them. If a strong OMP channel exists, it preserves the full structure of the state discrimination problem, while a weak OMP channel only preserves the structure pertaining to a sub-ensemble of the original ensemble of states. In other words, in the former case the KKT conditions, Eq.\@ \eqref{eq: geometric KKT}, need to hold before and after the channel use for all values of the indices $\x$, while in the latter only for the values $\x \in \I$. Obviously, a weak OMP channel depends on the specific measurement(s) to be preserved; a strong OMP one is associated with the measurement that identifies all states in the ensemble, if it exists. 
It is clear from the above considerations that sets of weak OMP channels may overlap, as a channel may preserve a number of different measurements; they can also have only a trivial intersection, including the identity map only. 

As a concrete example consider the ensemble consisting of the four states in the Bennett-Brassard 1984 protocol \cite{bennett2014}, the eigenstates of the Pauli matrices $\hat{X}$ and $\hat{Z}$. Specifically, the states are $\left\{\ketbra{0}{0}, \ketbra{1}{1},\ketbra{+}{+}, \ketbra{-}{-}\right\}$, appearing with equal \emph{a priori} probabilities $\nicefrac{1}{4}$. Then, it is easy to see that one measurement that achieves the optimal guessing probability $\nicefrac{1}{2}$ is the one consisting of projectors to the states themselves, thus identifying all four states. Specifically, the POVM in this case is $M=\left\{\frac{\ketbra{0}{0}}{2}, \frac{\ketbra{1}{1}}{2},\frac{\ketbra{+}{+}}{2}, \frac{\ketbra{-}{-}}{2}\right\}$. However, $\hat{X}$ and $\hat{Z}$ measurements, with POVMs $M_{X}=\left\{\ketbra{+}{+}, \ketbra{-}{-}\right\}$ and $M_{Z}=\left\{\ketbra{0}{0}, \ketbra{1}{1}\right\}$, respectively, are also optimal. In the latter two cases, only two states are identified during a measurement. Consider a unitary channel that effects a rotation around the $z$ axis on the Bloch sphere. Obviously, only the $M_{Z}$ optimal measurement out of the three measurements is preserved. As a result, such a channel is weakly measurement preserving; in fact, for this ensemble there does not exist a unitary channel that is strongly measurement preserving. At the same time, as it will be shown later, the depolarization channel preserves all three measurements and thus belongs in the intersection of the three individual OMP sets; thus, it is an instance of an OMP channel belonging in the strong OMP set.


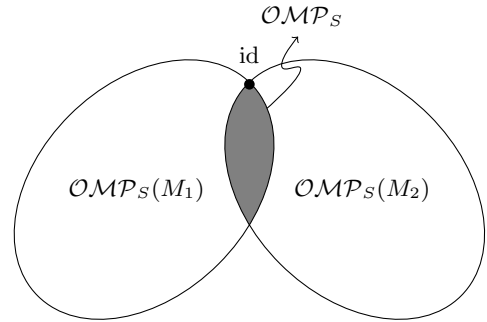
\begin{figure}[t]
	\begin{tikzpicture}[scale=2]
	
	
	
	\node at (0,0.9) {$\ido$};
	\node at (0.35,1.15) {$\OMP_S$};
	\node at (0.75,0) {$\OMP_{S}(M_2)$};
	\node at (-0.75,0) {$\OMP_{S}(M_1)$};

	\def\firstellipse{(0.7,0) ellipse [x radius=1, y radius =0.7, rotate=-45]}
	\def\secondellipse{(-0.7,0) ellipse [x radius=1, y radius =0.7, rotate=45]}
	\def\boundingbox{(-1.5,-0.5) rectangle (1.5,1.5)}
	
	
	\begin{scope}
	\clip \boundingbox;
	\clip \firstellipse;
	\fill[gray] \secondellipse;
	\end{scope}
	
	\draw \firstellipse \secondellipse;
	\begin{scope}[shift={(-0.82,-0.5)}]
	\draw [ ->] [black] plot [smooth, tension=1] coordinates {(48:1.4) (48:1.73) (52:1.68) (53:1.9)  };
	\end{scope}
	
	\fill[fill=black] (0,0.7) circle (1pt);

	\end{tikzpicture}
		\caption{Visualization of the OMP sets for an ensemble $S$. The left ellipse depicts the weak OMP set associated with a measurement $M_1$, while the right the weak OMP set associated with a measurement $M_2$. The intersection of the ellipses (in gray) corresponds to the intersection of the weak OMP sets which gives the strong OMP set, containing channels that preserves all optimal measurements.  The identity map is trivially contained in all OMP sets. } 
\end{figure}

Given an ensemble $S$ of qubit states, the theorem implies that the set of qubit channels is split into two subsets: channels that are OMP and those that are not. Note that by collecting \emph{all possible} OMP channels, weak and strong, into one set, they preserve different measurements, in general. Obviously, mixing channels that preserve different optimal measurement might not preserve any measurement at all and thus such a set in not convex. From a practical point of view, however, it is important to consider the OMP set for a \emph{given} optimal measurement of interest, e.g. a measurement that might have been prepared in some experiment. Thus, we separately highlight the set of strong OMP channels as well as different sets of weak OMP channels that preserve certain optimal measurements.   We denote the set of strong OMP channels of an ensemble $S$ with $\OMP_S$, while a set of weak OMP channels that preserves the optimal measurement $M$ with $\OMP_{S}(M)$. It is obvious that the strong OMP set is formed by taking the intersection of all the weak OMP sets. We have the following result:
\begin{prop}
	The set of OMP channels for an optimal measurement $M$ of an ensemble $S$ is convex.
\end{prop}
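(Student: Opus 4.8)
The plan is to obtain convexity directly from the Theorem, exploiting the fact that its condition, Eq.~\eqref{eq: theorem OMP iff}, is \emph{jointly linear} in the channel $\N$ and in the guessing degradation $\delta_\N$. Fix an ensemble $S=\{q_\x,\rho_\x\}_{\x=1}^n$ with complementary states $\sigma_\x$ and parameters $r_\x$, and fix an optimal measurement $M$ that identifies the states with indices in the set $\A$; recall that $\OMP_S(M)$ is the set of qubit channels that preserve $M$. Take $\N_1,\N_2\in\OMP_S(M)$ with guessing degradations $\delta_1,\delta_2$, a weight $\lambda\in[0,1]$, and set $\N_\lambda:=\lambda\N_1+(1-\lambda)\N_2$ and $\delta_\lambda:=\lambda\delta_1+(1-\lambda)\delta_2$. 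The first step is to note that $\N_\lambda$ is again a qubit channel, since a convex combination of completely positive trace-preserving maps is completely positive and trace preserving.

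The second step is to verify the two conditions of the Theorem for $\N_\lambda$ with the parameter $\delta_\lambda$ and the same index set $\A$. For the equality I would use the concise form $(\N-\ido)(h_{\x\y})=\delta_\N(\sigma_\x-\sigma_\y)$ of Eq.~\eqref{eq: theorem OMP iff} together with linearity:
\begin{align*}
(\N_\lambda-\ido)(h_{\x\y})
&=\lambda\,(\N_1-\ido)(h_{\x\y})+(1-\lambda)\,(\N_2-\ido)(h_{\x\y})\\
&=\lambda\,\delta_1(\sigma_\x-\sigma_\y)+(1-\lambda)\,\delta_2(\sigma_\x-\sigma_\y)
=\delta_\lambda(\sigma_\x-\sigma_\y),
\end{align*}
for all $\x,\y\in\A$, which is exactly the first line of Eq.~\eqref{eq: theorem OMP iff}. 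For the inequality, from $\N_i\in\OMP_S(M)$ we have $r_\x\geq\delta_i\geq0$ for all $\x\in\A$ and $i=1,2$, so the same convex combination gives $r_\x\geq\delta_\lambda\geq0$ for all $\x\in\A$.

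The third step is to invoke the Theorem. Since Eq.~\eqref{eq: theorem OMP iff} holds for $\N_\lambda$ with the parameter $\delta_\lambda$, the sufficiency argument in the Theorem's proof shows that the original complementary states $\sigma_\x$, the numbers $r_\x^{(\N_\lambda)}:=r_\x-\delta_\lambda\geq0$, and the \emph{same} POVM $M$ (whose trace KKT condition $\tr[M_\x\sigma_\x]=0$ is unchanged) solve the discrimination problem for $S^{(\N_\lambda)}$; Eq.~\eqref{eq: pguess KKT} then yields $\pguess^{(\N_\lambda)}=q_\x+r_\x^{(\N_\lambda)}=\pguess-\delta_\lambda$, so $\delta_\lambda$ is indeed the guessing degradation of $\N_\lambda$ and the use of the Theorem is self-consistent. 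Hence $\N_\lambda\in\OMP_S(M)$; since $\N_1,\N_2$ and $\lambda$ were arbitrary, $\OMP_S(M)$ is convex, and it is nonempty since $\ido\in\OMP_S(M)$ with $\delta_{\ido}=0$.

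I expect the only delicate point, and the nearest thing to an obstacle, to be this last bookkeeping: the Theorem is phrased with $\delta_\N$ equal to the actual guessing degradation, so one must confirm that the interpolated $\delta_\lambda$ is genuinely $\delta_{\N_\lambda}$ rather than merely a number making the linear identity hold; this is precisely what the computation with Eq.~\eqref{eq: pguess KKT} above settles. An equivalent, more structural way to package the argument is to observe that the set of pairs $(\N,\delta)$ with $\N$ a qubit channel, $0\leq\delta\leq\min_{\x\in\A}r_\x$, and $(\N-\ido)(h_{\x\y})=\delta(\sigma_\x-\sigma_\y)$ for all $\x,\y\in\A$ is convex, being the intersection of the convex set of channels, an interval constraint on $\delta$, and an affine subspace; then $\OMP_S(M)$ is its image under the linear map $(\N,\delta)\mapsto\N$, hence convex.
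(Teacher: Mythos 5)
Your proof is correct and follows essentially the same route as the paper's: both exploit the linearity of the conditions in Eq.~\eqref{eq: theorem OMP iff} together with the convexity of the set of CPTP maps, concluding that the convex mixture is OMP with guessing degradation equal to the corresponding convex combination of $\delta_{\N_1}$ and $\delta_{\N_2}$. Your explicit check via Eq.~\eqref{eq: pguess KKT} that the interpolated parameter really is the guessing degradation of the mixed channel is a detail the paper leaves implicit, but it does not change the argument.
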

\begin{proof}
	This follows directly from the linearity of the conditions of the theorem and the convexity of the set of channels. Let $\N_1, \N_2 \in \OMP_{S}(M)$ (or $\OMP_{S}$) be two OMP channels of some ensemble $S$ that preserve an optimal measurement $M$, and $\N= (1-\kappa) \N_1 +\kappa \N_2$ a convex combination of the two. Writing the conditions of the theorem for each channel and taking their convex mixture shows that the channel $\N$ is also OMP, with guessing degradation the convex combination of the two individual guessing degradations of $\N_1$ and $\N_2$, i.e. $\delta_\N = (1-\kappa) \delta_{\N_1} +\kappa \delta_{\N_2}$ . 
\end{proof}



Let us examine the theorem in a number of special cases. First, we consider the case of equiprobable ensembles, i.e. ensembles of states that appear with equal \emph{a priori} probabilities. 
 The \emph{a priori} probabilities in this case are $q_\x=\nicefrac{1}{n} \,, \, \forall \x$, which implies that the conditions of the theorem become:
\begin{align}
\N[\rho_\x] - \N[\rho_\y] & =  \rho_\x -  \rho_\y + n \delta_\N (\sigma_\x -\sigma_\y) \,, \notag \\
1 & \geq n \delta_\N   \,. \label{eq: OMP iff equiprobable 1}
\end{align}
Noting that $\sigma_\x-\sigma_\y= \rho_\y-\rho_\x$ which follows from the KKT conditions for the original ensemble, Eq.\@ \eqref{eq: 1st KKT geometric form}, and substituting in last equation, we obtain
\begin{align}
\N[\rho_\x] - \N[\rho_\y] & =  \rho_\x -  \rho_\y + n \delta_\N \left(\rho_\y-\rho_\x \right) \notag \\
& =  (1-n \delta_\N) \left(\rho_\x -  \rho_\y \right)  \notag \\
& \equiv  \kappa\left(\rho_\x -  \rho_\y \right)  \,,
\end{align}
where we have defined the parameter $\kappa$,
\begin{align}
\kappa &= 1-n \delta_\N = 1- n\left(\pguess-\pguess^{(\N)} \right) \,.
\end{align}

Thus, the theorem implies that a channel $\N$ is OMP for an ensemble of $n$ equiprobable qubit states, $S^{(0)}=\{\frac{1}{n},\rho_\x\}_{\x=1}^{n}$, identified by a measurement $M=\{M_\x\}_{\x\in\A}$ if and only if
\begin{gather}
	\N[\rho_\x] - \N[\rho_\y] = \kappa\left(\rho_\x -  \rho_\y \right)  \\
	  \kappa \in (0,1] \,, \quad \x\,, \y \in \A \,.	\label{eq: OMP iff equiprobable}
\end{gather}
This is an agreement with the previous result in \cite{kechrimparis2019,kechrimparis2020,*kechrimparis2020a}. However, there results are now strengthened, since it is shown that the condition is necessary and sufficient in dimension two, while in \cite{kechrimparis2019,kechrimparis2020,*kechrimparis2020a} it was only shown to be sufficient. In addition, the structure of the parameter $\kappa$ is now understood to be a linear function of the guessing degradation $\delta_\N$.

Next we consider an ensemble that consists of a pair of states only, $S^{(2)}=\{q_\x,\rho_\x\}_{\x=1,2}$. In this case the problem is simplified since the two complementary states are two orthogonal projectors which obviously sum to the identity, $\sigma_1+\sigma_2=\id$, with their orthogonal complements being the two optimal POVM elements. That is, the optimal measurement is  $M=\{M_1,M_2\}=\{\ketbra{\psi}{\psi},\ketbra{\psi^\perp}{\psi^\perp}\}=\{\sigma_2,\sigma_1\}$, and in this case is unique. 
 If the Bloch vectors of the complementary states $\sigma_1,\sigma_2$ are denoted by $\vec{w}_1,\vec{w}_2$, they obey $\vec{w}_1=-\vec{w}_2\equiv \vec{p}$ and $\abs{\vec{w}_1},\abs{\vec{w}_2}=1$, from which it follows that $ \sigma_1-\sigma_2 =\vec{p}\cdot\vec{\sigma} \equiv \hat{P}$.
As a result, a measurement is OMP for an ensemble of two qubit states if and only if
	\begin{align}
	q_1 \N[\rho_1] - q_2 \N[\rho_2] & =  \left(q_1 \rho_1 - q_2 \rho_2\right) + \delta_\N  \hat{P} \notag \\
	r_i & \geq \delta_\N  \, \,, \quad i=1,2 \,, \label{eq: OMP iff two states}
	\end{align}
	where $\delta_\N$ is the guessing degradation and $\hat{P}=\vec{p}\cdot \vec{\sigma}$ is the observable that corresponds to the optimal measurement of the original ensemble.

An alternative formulation is the following. A measurement is OMP for an ensemble of two qubit states if and only if
\begin{align}
q_1 \N[\rho_1] - q_2 \N[\rho_2] & = \lambda \left(q_1 \rho_1 - q_2 \rho_2\right) + \mu \id \notag \\
r_i  \geq \delta_\N  \, \,, &\quad i=1,2 \,,  \label{eq: OMP iff two states 2}
\end{align}
where the parameters $\lambda$ and $\mu$ are explicitly given by
\begin{align}
\lambda &= \left(1-\frac{2\delta_\N}{r_1+r_2}\right) =  \frac{2\pguess^{(\N)}-1}{2\pguess-1} \,, \notag \\
\mu &=\delta_\N\frac{r_2-r_1}{r_1+r_2}= \delta_\N \frac{q_2-q_1}{2\pguess-1} \,.
\end{align}
The permissible values for the parameters $\lambda,\mu$ are 
\begin{align}
1\geq &\lambda \geq\frac{2q_1-1}{2\pguess-1}  \,, \notag \\
0 \geq &\mu \geq-\frac{q_1-q_2}{2}\frac{\pguess-q_1}{\pguess-\nicefrac{1}{2}} \,. \label{eq: lambda, mu range}
\end{align}

Note that the first condition in Eq.\@ \eqref{eq: OMP iff two states 2} can be concisely rewritten as 
\begin{equation}
\N(h_{12})=\lambda h_{12} +\mu \id \,,
\end{equation}
where $h_{12}$ denotes the Helstrom operator of the pair.
Thus, it follows that for a pair of states an optimal measurement is preserved by a quantum channel, if the resulting Helstrom operator is a certain linear combination of the original Helstrom operator and the identity.

\section{Characterization of qubit OMP channels for certain ensembles}
In this section, we derive general properties of OMP channels by restricting to certain types of ensembles of states. We start by recalling some known results on qubit channels.

A quantum channel is described by a \emph{completely positive, trace preserving} (CPTP) map. The set of qubit CPTP maps has been characterized in \cite{ruskai2002}. For a qubit state $\rho=\frac{1}{2} \left(\id + \vec{v}\cdot \vec{\sigma}\right)$, such a map can be written in the form:
\begin{equation}
\rho \rightarrow \N(\rho)=\frac{1}{2}\left(\id + (D\vec{v}+\vec{t})\cdot \vec{\sigma}\right)\,, \label{eq: qubit CPTP}
\end{equation}
where $D$ is some real $3\times 3$ matrix and $\vec{t}$ a vector with real entries. In geometric terms, the effect of a CPTP map is the transformation of the Bloch ball into a potentially displaced and rotated ellipsoid inside the ball; however, it's worth mentioning that not all ellipsoids in the interior correspond to legitimate CPTP maps \cite{ruskai2002}. The elements of $D$ and $\vec{t}$ are constrained in order for the map to be CPTP. Specifically, $D$ can be diagonalized with changes of bases to take the form $O_1 \Delta O_2$, where $O_1, O_2$ are rotations and $\Delta$ a diagonal matrix with real entries $\lambda_j$ that take values in $[-1,1]$. 

A unital map (i.e. $\vec{t}=0$) with diagonal $D$ is CPTP if and only if $(\lambda_1\pm\lambda_2)^2 \leq (1\pm \lambda_3)^2$. In general, a map as in Eq.\@ \eqref{eq: qubit CPTP} is CPTP if the map $\N_\Delta$ defined by
\begin{equation}
\N(\rho) = U \N_\Delta(V \rho \, V^\dagger) \, U^\dagger   \,.
\end{equation}
 for some unitaries $U,V$, is also CPTP. The conditions for $\N_\Delta$ to be CPTP are \cite{ruskai2002}:
 \begin{align}
(\lambda_1\pm\lambda_2)^2 \leq (1\pm \lambda_3)^2-t^2_3 \,, \label{eq: CPTP condition 1}
 \end{align}
 where $t_3$ denotes the third element of the vector $\vec{t}$ and
 \begin{align}
 &\left[1-(\lambda_1^2+\lambda_2^2+\lambda_3^2)-(t_1^2+t_2^2+t_3^2)\right]^2 \notag \\
 &\,\geq 4\left[\lambda_1^2(t_1^2+\lambda_2^2)+\lambda_2^2(t_2^2+\lambda_3^2)+\lambda_3^2(t_3^2+\lambda_1^2)-2\lambda_1\lambda_2\lambda_3\right] \,. \label{eq: CPTP condition 2}
 \end{align}
Note that if $\abs{\lambda_3}+\abs{t_3}\leq 1$ becomes an equality, then Eqs.\@ \eqref{eq: CPTP condition 1} and \eqref{eq: CPTP condition 2} are taken with $t_1=t_2=0$. 

Given a map of the form in Eq.\@ \eqref{eq: qubit CPTP} and using the conditions of the theorem, it follows that a map is OMP if and only if the following pairwise geometric conditions on the Bloch sphere are satisfied
\begin{align}
(D-\id)\left(q_\x \vec{v}_\x - q_\y \vec{v}_\y\right) + (q_\x -q_\y) \vec{t} -\delta_\N(\vec{s}_\x-\vec{s}_\y)=0 \,, \label{eq: Bloch OMP}
\end{align} 
$\forall \x,\y \in \A$ and $\delta_\N \leq r_\x$, and where $\vec{v}_\x$ denotes the Bloch vector of the state $\rho_\x$ and $\vec{s}_\x$ denotes the Bloch vector of the complementary state $\sigma_\x$. In other words, the channel $\N$ in Eq.\@ \eqref{eq: qubit CPTP}, is OMP if and only if the vector $\vec{t}$ can take the form
\begin{align}
\vec{t} = \frac{(\id-D)\left(q_\x \vec{v}_\x - q_\y \vec{v}_\y\right)}{q_\x-q_\y}  -\delta_\N \frac{(\vec{s}_\x-\vec{s}_\y)}{q_\x-q_\y} \,.
\end{align}
Note that if some of the \emph{a priori} probabilities are the same then for any such pair of indices, the constraint becomes a constraint on the matrix $D$ instead of the vector $\vec{t}$, which follows from Eq.\@ \eqref{eq: Bloch OMP}.

\subsection{Equiprobable ensembles}
Let us examine the set of OMP channels for equiprobable ensembles. For a channel of the form of Eq.\@ \eqref{eq: qubit CPTP}, the result in Eq.\@ \eqref{eq: OMP iff equiprobable} immediately implies the following geometric conditions $\forall \x,\y$:
\begin{align}
\left(D-\kappa \id\right) \left(\vec{v}_\x -\vec{v}_\y\right)\cdot \vec{\sigma}=0 \,,
\end{align}
which can be satisfied for all those maps for which all vectors $\vec{v}_\x-\vec{v}_\y$ are in the kernel of the matrix $D-\kappa \id$. Note that a map with $D=\kappa\id $ is always OMP, since it automatically satisfies the above conditions.
In addition, Eq.\@ \eqref{eq: OMP iff equiprobable} imposes the following restriction on the values of $\kappa$, namely $0<\kappa \leq 1$. Thus, we have obtained the following result.

\begin{prop}
	\label{prop: equiprobable}
	A channel is OMP for an equi-probable ensemble $S$ of $n$ qubit states, if and only if there exists a $\kappa\in(0,1]$ such that $(\vec{v}_\x-\vec{v}_\y )\in \ker{(D-\kappa \id)}$, where $\vec{v}_\x$ are the Bloch vectors of the states in the ensemble and $D$ is the real matrix in the definition of the map. 
	Moreover, a channel of the form
	\begin{equation}
	\Lambda (\rho) = \frac{1}{2}\left(\id + \left(\kappa \vec{v}+\vec{t}\, \right) \cdot \vec{\sigma} \right) \, \,, \quad  0<\kappa \leq 1 \,,
	\end{equation}
	is always OMP, where $\kappa$ and $\vec{t}$ are also constrained by the conditions for $\Lambda$ to be a CPTP map.
\end{prop}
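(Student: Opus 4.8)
The plan is to feed the Bloch-ball parametrisation of a qubit channel, Eq.~\eqref{eq: qubit CPTP}, into the equiprobable OMP criterion already proved in Eq.~\eqref{eq: OMP iff equiprobable}, and then read off a condition on the matrix $D$. Writing $\rho_\x=\tfrac12(\id+\vec v_\x\cdot\vec\sigma)$ and $\N(\rho_\x)=\tfrac12(\id+(D\vec v_\x+\vec t\,)\cdot\vec\sigma)$, I would form the pairwise differences. On the left-hand side the identity components cancel and, crucially, so does the displacement $\vec t$, leaving $\N[\rho_\x]-\N[\rho_\y]=\tfrac12\,[D(\vec v_\x-\vec v_\y)]\cdot\vec\sigma$; on the right-hand side $\kappa(\rho_\x-\rho_\y)=\tfrac{\kappa}{2}\,(\vec v_\x-\vec v_\y)\cdot\vec\sigma$. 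Hence Eq.~\eqref{eq: OMP iff equiprobable} is equivalent to the operator identity $[(D-\kappa\id)(\vec v_\x-\vec v_\y)]\cdot\vec\sigma=0$ for all $\x,\y\in\A$.

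Next I would invoke the linear independence of the Pauli operators $\sigma_1,\sigma_2,\sigma_3$, which converts that operator identity into the vector identity $(D-\kappa\id)(\vec v_\x-\vec v_\y)=0$, i.e.\ $\vec v_\x-\vec v_\y\in\ker(D-\kappa\id)$ for every pair $\x,\y\in\A$. The quantifier on $\kappa$ is then easy to dispatch: for the forward implication the required witness is the specific value $\kappa=1-n\delta_\N\in(0,1]$ supplied by Eq.~\eqref{eq: OMP iff equiprobable}, while for the converse the mere existence of a $\kappa\in(0,1]$ with all difference vectors in $\ker(D-\kappa\id)$ gives $D(\vec v_\x-\vec v_\y)=\kappa(\vec v_\x-\vec v_\y)$ and hence $\N[\rho_\x]-\N[\rho_\y]=\kappa(\rho_\x-\rho_\y)$, which is precisely Eq.~\eqref{eq: OMP iff equiprobable}, so $\N$ is OMP. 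This settles the stated equivalence.

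For the \textbf{moreover} part I would specialise to $D=\kappa\id$. Then $D-\kappa\id$ is the zero matrix, its kernel is all of $\mathbb{R}^3$, so the kernel condition holds for \emph{any} collection of Bloch vectors --- hence for every equiprobable qubit ensemble --- regardless of $\vec t$. A map of the announced form $\Lambda(\rho)=\tfrac12(\id+(\kappa\vec v+\vec t\,)\cdot\vec\sigma)$ is exactly the case $D=\kappa\id$, so it is OMP for all equiprobable ensembles; the only remaining requirement is that $\Lambda$ actually be a channel, which is where the CPTP constraints of \cite{ruskai2002} (Eqs.~\eqref{eq: CPTP condition 1}--\eqref{eq: CPTP condition 2} with $\lambda_1=\lambda_2=\lambda_3=\kappa$) enter and, together with $0<\kappa\le1$, bound the admissible pairs $(\kappa,\vec t\,)$.

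I expect no real obstacle here --- the core is one line of linear algebra --- so the only points deserving care are bookkeeping ones: the cancellation of $\vec t$ in the pairwise difference is exactly the feature that singles out equiprobable ensembles (for unequal \emph{a priori} probabilities the $\vec t$ term survives, cf.\ Eq.~\eqref{eq: Bloch OMP}, and the constraint then couples $D$ with $\vec t$ instead of restricting $D$ alone), and one must remember that $\kappa=0$ is excluded in Eq.~\eqref{eq: OMP iff equiprobable}, so the fully depolarising map $D=0$ is correctly not counted as OMP.
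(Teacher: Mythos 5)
Your proposal is correct and follows essentially the same route as the paper: it substitutes the Bloch-ball form of the channel into the already-established equiprobable criterion Eq.~\eqref{eq: OMP iff equiprobable}, notes that $\vec t$ drops out of the pairwise differences so the condition reduces to $(\vec v_\x-\vec v_\y)\in\ker(D-\kappa\id)$ with $\kappa=1-n\delta_\N\in(0,1]$, and observes that $D=\kappa\id$ satisfies this trivially, with the CPTP constraints then restricting $(\kappa,\vec t\,)$. The only difference is that you spell out the cancellation of $\vec t$ and the linear independence of the Pauli operators explicitly, which the paper leaves implicit.
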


It is straightforward to see that such a channel can also be written in the form
\begin{equation}
\Lambda(\rho) = (1-\eta) \rho +\eta \, \tau \,,
\end{equation}
with $1-\eta = \kappa$ and $\tau=\frac{1}{2}\left(\id+\frac{\vec{t}}{1-\kappa}\cdot \vec{\sigma}\right)$.  If  $\vec{t}=0$, then it follows that such a channel reduces to a depolarizing channel. Finally, if $\kappa=1$ the map corresponds to the identity. Note, however,  that these examples do not cover all cases allowed by Proposition \ref{prop: equiprobable} \,.

The results of this section show that the OMP set is never trivial for equi-probable ensembles.


\subsection{Two-state ensembles}

Having examined the case of equiprobable ensembles, we now consider ensembles of two states with strictly non-equal \emph{a priori} probabilities, ordered so that $q_1>q_2$. The case of equal \emph{a priori} probabilities is already covered by the results of last section. 

As it will be shown in a later section, for two states there always exist unitary transformations, $\E_U$ say, that preserve the optimal measurement. Moreover, from the convexity of the OMP set we know that any map of the form $ (1-\eta) \ido + \eta \E_U$ will also preserve the optimal measurement. As a result, the set of OMP maps for two-state ensembles is also nontrivial. 

Let us now derive the general form for a CPTP map to preserve the optimal measurement for an ensemble of two states.
Using Eq. \eqref{eq: qubit CPTP} and substituting in the conditions of the theorem for two states, Eq.\@ \eqref{eq: OMP iff two states 2}, we find the geometric conditions 
\begin{align}
\left(D-\lambda \id\right)  \vec{h} +  \vec{t}&=0 \,, \notag \\
(1-\lambda)(q_1-q_2)-2\mu &=0 \,,
\end{align}
where we have defined the vector
\begin{equation}
\vec{h} = \frac{\left(q_1 \vec{v}_1 -q_2 \vec{v}_2\right)}{(q_1-q_2)} \,,
\end{equation}
which is directly related to the Helstrom operator; it is in fact its Bloch vector, rescaled by the difference of \emph{a priori} probabilities. 
Thus, we have obtained the following result.
\begin{prop}
	Given an ensemble of two qubit states of non-equal \emph{a priori} probabilities and a CPTP map, the map is OMP if and only if there exist $ \mu, \lambda$ with values as in Eq.\@ \eqref{eq: lambda, mu range} such that
	\begin{align}
	\vec{t}=-(D-\lambda \id) \vec{h} \,, \\
	\mu = (1-\lambda) \frac{q_1-q_2}{2} \,.
	\end{align}
	Moreover, if the map is unital, i.e. $\vec{t}=0$, the first condition implies that a map is OMP if there exists a $ \lambda$ such that $\vec{h} \in \ker{(D-\lambda \id)}$\,.   
\end{prop}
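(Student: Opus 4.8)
The plan is to specialize the main theorem to the two-state case already recorded in Eq.~\eqref{eq: OMP iff two states 2} and then read it off in the Bloch picture using the parametrization Eq.~\eqref{eq: qubit CPTP}. By that theorem, $\N$ is OMP for $S^{(2)}=\{q_\x,\rho_\x\}_{\x=1,2}$ if and only if there exist $\lambda,\mu$ obeying $q_1\N[\rho_1]-q_2\N[\rho_2]=\lambda(q_1\rho_1-q_2\rho_2)+\mu\id$ together with $r_i\geq\delta_\N\geq0$; moreover the explicit formulas for $\lambda,\mu$ in terms of $\delta_\N$ show that the inequalities $r_i\geq\delta_\N\geq0$ are exactly the ranges in Eq.~\eqref{eq: lambda, mu range}. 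So the only thing left is to turn the operator identity into the two displayed equations.

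First I would expand both sides in the Pauli basis. With $\rho_i=\tfrac12(\id+\vec v_i\cdot\vec\sigma)$ and $\N[\rho_i]=\tfrac12(\id+(D\vec v_i+\vec t)\cdot\vec\sigma)$ one obtains
\begin{equation*}
q_1\N[\rho_1]-q_2\N[\rho_2]=\tfrac12\Big((q_1-q_2)\id+\big(D(q_1\vec v_1-q_2\vec v_2)+(q_1-q_2)\vec t\big)\cdot\vec\sigma\Big),
\end{equation*}
whereas the right-hand side of the identity equals $\tfrac{\lambda}{2}\big((q_1-q_2)\id+(q_1\vec v_1-q_2\vec v_2)\cdot\vec\sigma\big)+\mu\id$. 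Since the components of $\{\id,\vec\sigma\}$ form a basis of the Hermitian $2\times2$ matrices, the operator identity is equivalent to equality of the coefficient of $\id$ and of the Pauli part separately.

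The scalar part gives $\tfrac12(q_1-q_2)=\tfrac{\lambda}{2}(q_1-q_2)+\mu$, i.e.\ $\mu=(1-\lambda)\tfrac{q_1-q_2}{2}$, the second claimed equation. The vector part gives $D(q_1\vec v_1-q_2\vec v_2)+(q_1-q_2)\vec t=\lambda(q_1\vec v_1-q_2\vec v_2)$; dividing by $q_1-q_2\neq0$ and using the definition $\vec h=(q_1\vec v_1-q_2\vec v_2)/(q_1-q_2)$ gives $\vec t=-(D-\lambda\id)\vec h$, the first claimed equation. The converse is immediate, since the two equations reassemble into the operator identity; combined with the already-established admissible range of $(\lambda,\mu)$ from Eq.~\eqref{eq: lambda, mu range}, this yields the asserted equivalence. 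For the unital subcase $\vec t=0$ the first equation collapses to $(D-\lambda\id)\vec h=0$, i.e.\ $\vec h\in\ker(D-\lambda\id)$ for some $\lambda$ in the allowed interval — equivalently, $\lambda$ is an eigenvalue of $D$ with eigenvector proportional to $\vec h$ — which is the stated sufficient condition.

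There is no serious obstacle here: the whole argument is the coefficient matching above, which the text preceding the proposition essentially carries out. The points that need a word of care are that the hypothesis $q_1\neq q_2$ is precisely what makes the division by $q_1-q_2$ — and hence both $\vec h$ and the formula $\vec t=(\lambda\id-D)\vec h$ — legitimate (if $q_1=q_2$ the constraint would instead become a condition on $D$ alone, cf.\ the remark after Eq.~\eqref{eq: Bloch OMP}), and that $\lambda$ and $\mu$ are not independent — the scalar equation ties $\mu$ to $\lambda$, so the genuine content is the single vector equation for $\vec t$ with $\lambda$ restricted as in Eq.~\eqref{eq: lambda, mu range}.
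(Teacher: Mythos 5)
Your proof is correct and follows essentially the same route as the paper: the text preceding the proposition derives exactly these two conditions by substituting the Bloch parametrization of Eq.~\eqref{eq: qubit CPTP} into the two-state OMP criterion Eq.~\eqref{eq: OMP iff two states 2} and matching the identity and Pauli components, with the range of $(\lambda,\mu)$ inherited from Eq.~\eqref{eq: lambda, mu range}. Your remarks on the role of $q_1\neq q_2$ and the dependence of $\mu$ on $\lambda$ are consistent with the paper's own discussion.
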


\begin{figure}[]
	\includegraphics[]{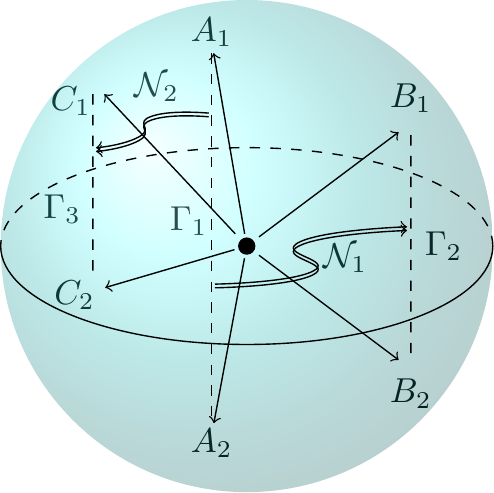}
	\caption{Visualization of a two-state ensemble $S=\{q_\x, \rho_\x\}_{\x=1,2}$ and two OMP channels $\N_1,\N_2$. $A_\x=q_\x \rho_\x$ denote the two states in the original ensemble, multiplied by the \emph{a priori} probabilities;  $B_\x=q_\x \N_1 (\rho_\x)$ and $C_\x=q_\x \N_2 (\rho_\x)$ represent the states multiplied by probabilities after the use of two OMP channels $\N_1, \N_2$ respectively. $\Gamma_1$ denotes the Helstrom operator of the original ensemble, while $\Gamma_2, \Gamma_3$ denote the ones of the ensembles after the use of channels $\N_1, \N_2$ respectively. The fact that the channels are OMP for the ensemble $S$ means that the line $\Gamma_1$ is parallel to $\Gamma_2, \Gamma_3$. Since the length of the lines $\Gamma_i$ is directly related to the guessing probabilities, the length of $\Gamma_1$ is always equal to or greater than those of $\Gamma_2, \Gamma_3$.}  
\end{figure}
It is instructive to recall the implications of Eq.\@ \eqref{eq: OMP iff two states}: a channel is OMP for a two-state ensemble if the Helstrom operator, $h^{(N)}_{12}$, of the resulting ensemble is equal to the the Helstrom operator, $h_{12}$, of the original ensemble plus the observable parallel to $h_{12}$, multiplied by some real number between zero and one. From this observation, it follows that a channel of the form
\begin{equation}
\E (\rho) = (1-\eta) \rho +\eta \frac{h_{12}}{q_1-q_2} \,,
\end{equation}
is OMP for any two qubit state ensemble. However, note that this is not the most general form of a channel allowed by Proposition 3. 

In general, the following geometric picture emerges. Any channel that transforms the two states in the ensemble in a way such that the Bloch vector of the Helstrom operator after the channel is parallel to the Bloch vector of the original one and with the same direction, then the channel is OMP. The length of the Bloch vector of the Helstrom operator does not have to be equal to the original in general, as long as it remains larger than the critical value $q_1-q_2$ \cite{hunter2003,weir2017}; otherwise, the measurement is not preserved and always guessing the most probable state without performing any measurement is the optimal strategy. Moreover, the length of the Bloch vector of the Helstrom operator is linearly related to the guessing probability and as a result can not increase after the use of the channel.

\section{Characterization of OMP properties of certain channels}

In this section we consider the reverse of the problem considered in last section: we fix a class of channels and look for ensembles for which they are OMP.


\subsection{Unitary channels}
Let us first examine the case of unitary channels. We have the following two propositions.
\begin{prop}\label{prop: unitary 2}
	A unitary map $\E_U(\rho)=U \rho \, U^\dagger $ is OMP  for an ensemble $S$ of two states, if and only if it leaves invariant the observable corresponding to the optimal measurement associated with the original Helstrom operator, $h_{12}=q_1 \rho_1-q_2 \rho_2$. 
\end{prop}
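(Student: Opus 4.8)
The plan is to apply the two-state OMP characterization from Eq.\@ \eqref{eq: OMP iff two states}, specialized to a unitary channel $\E_U$. For a unitary channel the guessing probability is unchanged, since $\tr[M_\x \E_U(\rho_\x)] = \tr[U^\dagger M_\x U \rho_\x]$ and $\{U^\dagger M_\x U\}$ is again a valid POVM; hence $\pguess^{(\E_U)} = \pguess$ and the guessing degradation vanishes, $\delta_{\E_U}=0$. Substituting $\delta_{\E_U}=0$ into Eq.\@ \eqref{eq: OMP iff two states}, the inequality constraint $r_i \geq \delta_{\E_U}$ is automatically satisfied (the $r_i$ are non-negative), so the OMP condition collapses to the single equality
\begin{equation}
q_1 \E_U(\rho_1) - q_2 \E_U(\rho_2) = q_1 \rho_1 - q_2 \rho_2 \,,
\end{equation}
i.e.\@ $U h_{12} U^\dagger = h_{12}$, which says precisely that $U$ leaves the Helstrom operator $h_{12}$ invariant.

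Next I would translate ``leaves $h_{12}$ invariant'' into ``leaves invariant the observable $\hat{P}$ corresponding to the optimal measurement.'' Recall from the discussion preceding Eq.\@ \eqref{eq: OMP iff two states} that for a pair of states the two complementary states are orthogonal rank-one projectors $\sigma_1 = \proj{\psi}$, $\sigma_2=\proj{\psi^\perp}$, the unique optimal measurement is $M=\{\sigma_2,\sigma_1\}$, and the associated observable is $\hat{P}=\sigma_1-\sigma_2 = \vec{p}\cdot\vec\sigma$. From the KKT condition Eq.\@ \eqref{eq: 1st KKT geometric form}, $h_{12}=q_1\rho_1-q_2\rho_2 = r_2\sigma_2 - r_1\sigma_1$, together with $\tr h_{12} = q_1-q_2$ and $r_1=r_2\equiv r$ forced by $\pguess = q_i+r_i$ for both $i$, one gets $h_{12}$ is a real linear combination of $\id$ and $\hat{P}$, with a strictly nonzero coefficient of $\hat P$ (this is case 3 of the qubit classification, which we have assumed holds). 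Conversely $\hat P$ is a linear combination of $\id$ and $h_{12}$. Since any unitary conjugation fixes $\id$, it follows that $U h_{12} U^\dagger = h_{12}$ if and only if $U\hat P U^\dagger = \hat P$, giving the stated equivalence.

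The only genuinely delicate point is the bookkeeping around the excluded trivial cases: the equivalence ``$h_{12}$ invariant $\iff$ $\hat P$ invariant'' requires that the $\hat P$-component of $h_{12}$ be nonzero, which is exactly the assumption (made in the Preliminaries, case 3) that the optimal strategy is a genuine measurement rather than ``always guess the more likely state.'' I would state this explicitly. Everything else is a direct substitution into Eq.\@ \eqref{eq: OMP iff two states} plus the elementary linear-algebra fact that conjugation fixes the identity; no real obstacle remains beyond being careful that the argument is stated as an ``if and only if'' in both directions.
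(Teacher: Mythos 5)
Your argument is correct in substance and reaches the same characterization as the paper, but by a somewhat different route. The paper proves Propositions~\ref{prop: unitary 2} and \ref{prop: unitary 3} together by conjugating the first KKT condition with $U$, reading off that $r_\x^{(\E_U)}=r_\x$ and $\sigma_\x^{(\E_U)}=U\sigma_\x U^\dagger$ (hence $\delta_{\E_U}=0$), and then demanding invariance of the complementary states, which for two states forces a rotation about the Bloch axis of the Helstrom operator. You instead get $\delta_{\E_U}=0$ from the POVM-conjugation argument, feed it into the already-derived two-state criterion \eqref{eq: OMP iff two states} so that the OMP condition collapses to $Uh_{12}U^\dagger=h_{12}$, and then trade $h_{12}$ for $\hat P$. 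Your version is more modular, reusing the stated corollary of the theorem; the paper's treats general $n$ uniformly and obtains Proposition~\ref{prop: unitary 3} from the same computation.

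One intermediate claim is false, though it happens to be inessential. You write that $\pguess=q_i+r_i$ for $i=1,2$ forces $r_1=r_2$; it forces the opposite, $r_1-r_2=q_2-q_1$, which is nonzero exactly when the priors are unequal (and is consistent with your own remark that $\tr h_{12}=q_1-q_2=r_2-r_1$). The conclusion you need does not rely on $r_1=r_2$: from $h_{12}=r_2\sigma_2-r_1\sigma_1$ together with $\sigma_1+\sigma_2=\id$ and $\hat P=\sigma_1-\sigma_2$ one gets
\begin{equation*}
h_{12}=\frac{r_2-r_1}{2}\,\id-\frac{r_1+r_2}{2}\,\hat P\,,
\end{equation*}
and the $\hat P$-coefficient is nonzero because $r_i=\pguess-q_i>0$ for both $i$ once the ``always guess without measuring'' case (case 1 of the classification in the Preliminaries) is excluded. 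With that correction, the equivalence $Uh_{12}U^\dagger=h_{12}\iff U\hat PU^\dagger=\hat P$ goes through exactly as you argue, in both directions, and the proof is complete.
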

In other words, if a rotation by a unitary on the Bloch sphere is around the Bloch vector of the observable that corresponds to the optimal measurement for the original Helstrom operator, then the optimal measurement is preserved. 
The situation is different for ensembles of more than two states, as shown in next proposition.

\begin{prop}\label{prop: unitary 3}
	A unitary map $\E_U(\rho)=U \rho \, U^\dagger $ can not be OMP for an ensemble $S$ of $n>2$ states if all states are identified by the measurement. 
\end{prop}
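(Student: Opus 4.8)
The plan is to use that a unitary channel leaves the guessing probability unchanged, which collapses the condition of the Theorem to a pure commutation requirement, and then to show that this requirement is incompatible with an optimal measurement that genuinely resolves $n>2$ outcomes. Throughout, $U$ is taken to be nontrivial, i.e. not a scalar multiple of the identity (the identity channel being of course OMP for every ensemble).

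First I would observe that for a unitary $\N=\E_U$ one has $\pguess^{(\N)}=\pguess$, since conjugation by $U$ is a bijection of POVMs that intertwines the two discrimination problems and hence preserves their optimal values; thus $\delta_\N=0$. With all states identified, $\A=\{1,\dots,n\}$, and $\delta_\N=0$, the condition \eqref{eq: theorem OMP iff} of the Theorem reduces to $U h_{\x\y}U^{\dagger}=h_{\x\y}$ for every pair, where $h_{\x\y}=q_\x\rho_\x-q_\y\rho_\y$; equivalently, by \eqref{eq: Bloch OMP} with $\vec t=\vec 0$ and $D=R\in SO(3)$ the rotation induced by $U$, one has $(R-\id)(q_\x\vec v_\x-q_\y\vec v_\y)=\vec 0$ for all $\x,\y$. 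Since a nontrivial $R\in SO(3)$ has a one-dimensional fixed space $\mathbb{R}\hat n$ (its axis), every pairwise difference $q_\x\vec v_\x-q_\y\vec v_\y$ is parallel to $\hat n$, so the points $\{q_\x\vec v_\x\}_{\x=1}^{n}$ are collinear on a line parallel to $\hat n$. Taking Bloch vectors in the KKT identity \eqref{eq: 1st KKT geometric form}, $q_\x\vec v_\x-q_\y\vec v_\y=r_\y\vec s_\y-r_\x\vec s_\x$, the left-hand side is parallel to $\hat n$, so the points $\{r_\x\vec s_\x\}_{\x=1}^{n}$ are likewise collinear on a line $\ell$ parallel to $\hat n$; here $\vec s_\x$ is the Bloch vector of $\sigma_\x$, a rank-one projector with $|\vec s_\x|=1$ (for an identified state $r_\x\sigma_\x=K-q_\x\rho_\x$ is nonzero and singular), and $r_\x=\pguess-q_\x>0$.

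The final step is to rule out this collinear configuration, splitting on whether $\ell$ meets the origin. If it does not, write $\ell=\{\vec p+s\hat n:s\in\mathbb{R}\}$ with $\vec 0\neq\vec p\perp\hat n$; then $\vec s_\x=(\vec p+s_\x\hat n)/r_\x$ has component $|\vec p|/r_\x>0$ along $\hat p=\vec p/|\vec p|$, with the same sign for every $\x$. But an optimal qubit POVM has elements $M_\x=w_\x(\id-\sigma_\x)$ with $w_\x\ge 0$, and $\sum_\x M_\x=\id$ forces $\sum_\x w_\x=2$ and $\sum_\x w_\x\vec s_\x=\vec 0$; dotting the latter with $\hat p$ yields $|\vec p|\sum_\x w_\x/r_\x=0$, impossible as the summands are nonnegative and not all $w_\x$ vanish. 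If $\ell$ does pass through the origin, then $r_\x\vec s_\x\in\mathbb{R}\hat n$, so $\vec s_\x=\pm\hat n$ and $S$ has only two distinct complementary states; for $n>2$ at least two of the $\sigma_\x$ coincide, so the optimal measurement carries at most two distinct nonzero effects and cannot genuinely resolve $n>2$ states. Either way the hypothesis is contradicted.

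I expect the second branch of the last step to be the main obstacle. Under the bare reading ``$M_\x\neq 0$'' of ``identified'', positive weight can still be distributed over coincident effects, so a clean contradiction requires interpreting ``all states identified'' as ``the optimal measurement has $n$ distinct nonzero effects'', equivalently ``pairwise distinct complementary states'', under which the second branch closes immediately; alternatively one may try to push the collinearity down to the states $\rho_\x$ and land in the commuting/classical case, where at most two states are detected. Making this reduction faithful to the definition of ``identified'' is where the real work lies; the first two steps are forced.
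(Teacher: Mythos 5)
Your argument is correct and lands at the same bottleneck as the paper's, but it gets there by a genuinely different route. The paper conjugates the KKT decomposition $K=q_\x\rho_\x+r_\x\sigma_\x$ by $U$ and invokes uniqueness of the complementary states to conclude $\sigma_\x^{(\E_U)}=U\sigma_\x U^\dagger$ and $r_\x^{(\E_U)}=r_\x$; preservation of the measurement then forces each $\sigma_\x$ to be a fixed point of the induced Bloch rotation, which immediately puts every $\vec s_\x$ on the rotation axis and caps the number of distinct complementary states at two. You instead start from the pairwise condition of the theorem with $\delta_\N=0$, which only yields collinearity of $\{q_\x\vec v_\x\}$, hence of $\{r_\x\vec s_\x\}$, along a line parallel to the axis; the extra case in which that line misses the origin is an artifact of working with differences rather than with the $\sigma_\x$ themselves, and your normalization argument ($\sum_\x w_\x\vec s_\x=\vec 0$ dotted with $\hat p$) disposes of it cleanly. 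The paper's route is shorter because uniqueness of the complementary polytope kills your off-axis case for free; yours is self-contained given only the statement of the theorem, and the first two steps are, as you say, forced.

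The caveat you flag is real, and it is not a defect of your proof alone: under the bare reading of ``identified'' as $M_\x\neq 0$, the proposition as stated fails. For instance, take three commuting states diagonal in the $Z$ basis with equal priors and populations tied so that two states share the outcome-$\ket{0}$ guess (e.g.\ populations $0.9,0.9,0.1$); then $\sigma_1=\sigma_2=\proj{1}$, $\sigma_3=\proj{0}$, the POVM $\{\tfrac{1}{2}\proj{0},\tfrac{1}{2}\proj{0},\proj{1}\}$ is optimal with all three elements nonzero, and any rotation about the $z$ axis fixes the ensemble and preserves this measurement. The paper's proof passes over this silently (``only possible when the measurement consists of two projectors''), so the proposition must be read with $n>2$ identified states entailing pairwise distinct complementary states --- exactly the strengthening you propose, under which both proofs close.
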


\begin{proof}
	
	Both propositions follow from writing the first of the KKT conditions for the states in the ensemble $S$ and conjugating with the unitary $U$. Then, one finds that the parameters $r^{(\E_U)}_\x$ and the complementary states $\sigma^{(\E_U)}_\x$ after the channel use are given in terms of the original ones by 
	\begin{equation}
	r^{(\E_U)}_\x = r_\x  \, \,, \quad \sigma^{(\E_U)}_\x = U \sigma_\x \, U^\dagger \,,
	\end{equation} 
	which also shows that the guessing degradation is zero. Moreover, this shows that the measurement is not preserved, unless $\sigma_\x ^{(U)}=\sigma_\x$, which can only happen if complementary states are left invariant under the action of the unitary. This is only possible when the measurement consists of two projectors and the unitary effects a transformation along the axis parallel to the Bloch vector of the Helstrom operator.
\end{proof}

\subsection{Depolarization channel}
A depolarization channel is defined as
\begin{align}
\D[\rho]=(1-\eta) \rho +\eta \frac{\id}{2} \,, \quad (1-\eta) \in [0,1] \,.
\end{align}
We will show that the depolarization channel does satisfy the OMP conditions for equiprobable ensembles as well as a pair of states. We will also see why it fails for an ensemble of $n>2$ states with unequal \emph{a priori} probabilities.

Let us write the expression for the left hand side of Eq.\@ \eqref{eq: theorem OMP iff}, with the channel $\N$ being a depolarization channel:
\begin{align}
q_\x \D[\rho_\x]-q_\y \D[\rho_\y] = &(1-\eta)\left( q_\x \rho_\x -q_\y \rho_\y \right) \notag \\
&\quad + \eta \left(q_\x-q_\y\right) \frac{\id}{2} \,. \label{eq: LHS of OMP condition for depolarization}
\end{align}
It is obvious that the second term on the right hand side of last equation is the one that does not allow the depolarization channel to satisfy the OMP condition in general. However, the second term goes away for an ensemble of equal \emph{a priori} probabilities and the RHS reduces to Eq.\@ \eqref{eq: OMP iff equiprobable}, which also shows that it is OMP. Similarly, for an ensemble of two states only, the RHS has the form of Eq.\@ \eqref{eq: OMP iff two states 2}, which also shows that it is OMP directly. The fact that the depolarization channel is OMP for equiprobable ensembles was first shown in \cite{kechrimparis2019} by noting that it satisfies the condition in Eq.\@ \eqref{eq: OMP old}. For a two-state ensemble the situation is different, as it will not satisfy Eq.\@ \eqref{eq: OMP old} in general, which also confirms that it is only a sufficient condition. However, it was shown in \cite{kechrimparis2020,*kechrimparis2020a} that the depolarization channel is also OMP for two-state ensembles.
This observation was exploited  and a protocol was proposed to map any channel to an OMP one via an instance of a supermap, specifically channel twirling.

\section{Constructing an OMP set of an arbitrary qubit ensemble}

In this section, we derive the general solution to maps of the form of Eq.\@ \eqref{eq: qubit CPTP} that are consistent with the conditions of the theorem for a certain ensemble and for a given measurement to be preserved.

Let $S$ be the ensemble  in question and $M$ an optimal measurement that identifies the states with indices from an index set $\A$.
 The conditions of the theorem in the Bloch representation, Eq.\@ \eqref{eq: Bloch OMP}, become
\begin{align}
(D-\id)\vec{h}_{\x \y} + (q_\x -q_\y) \vec{t} -\delta_\N\vec{s}_{\x \y}=0 \,,\, \, \forall \, \x,\y \in \A\,, \label{eq: Bloch OMP simplified}
\end{align} 
where we defined $\vec{h}_{\x \y}=q_\x \vec{v}_\x - q_\y \vec{v}_\y=r_\y\vec{s}_\y-r_\x\vec{s}_\x$ and  $\vec{s}_{\x \y}= \vec{s}_\x-\vec{s}_\y $. 

Not all of these conditions are linearly independent since one can combine the conditions for pairs of indices to derive the condition for others. Specifically, denoting the left hand side of last equation with $R_{\x\y}$, then the condition is of the form $R_{\x\y}=0$. Moreover, it is trivial to notice that $R_{\x k} + R_{k \y}=R_{\x\y}$, which implies that if one has considered the condition for the index pairs $(\x,k)$ and $(k,\y)$, then the index pair $(\x,\y)$ has already been included. Let $m$ denote the number of elements in the index set $\A$. It is easy to see that the number of linear independent conditions are $m-1$. Let $a_j \in \A$; then the $m-1$ linearly independent conditions are explicitly
\begin{align}
(D-\id)\vec{h}_{a_1 a_2}& + (q_{a_1} -q_{a_2}) \vec{t} -\delta_\N\vec{s}_{a_1 a_2}&=0 \,, \notag \\
&\, \, \,  \vdots  \notag \\
(D-\id)\vec{h}_{a_1 a_{m-1}} &+ (q_{a_1} -q_{a_{m-1}}) \vec{t} -\delta_\N\vec{s}_{a_1 a_{m-1}}&=0 \,,  
\label{eq: Bloch OMP explicit}
\end{align}
where we arbitrarily chose index $a_1$ as the one appearing in each equation. 
Since we are interested in finding the channels that satisfy such conditions, the unknowns are the elements of $D$ and $\vec{t}$ in the definition of the map, as well as the guessing degradation $\delta_\N$. Note that after obtaining the general solution for $D, \vec{t}$ we are not done, since the conditions for the map to be CPTP, Eqs.\@ \eqref{eq: CPTP condition 1} and \eqref{eq: CPTP condition 2}, need to be imposed. As a result, we here obtain a set of \emph{feasible} solutions, which then need to be sieved by imposing Eq.\@ \eqref{eq: CPTP condition 1} and \eqref{eq: CPTP condition 2} in order to obtain the set of \emph{admissible} solutions. 

To turn Eqs.\@ \eqref{eq: Bloch OMP explicit} to a set of linear matrix equations in the standard form, we take the $i$-th component of each vector equation together to obtain the new set of equations
\begin{align}
H \vec{d}_1   +t_1 \vec{q} -\delta_\N \vec{w}_1 &= H \hat{e}_1 \,, \notag \\
H \vec{d}_2 + t_2 \vec{q} - \delta_\N \vec{w}_2 &= H \hat{e}_2 \,, \notag \\
H \vec{d}_3  + t_3 \vec{q} -\delta_\N \vec{w}_3 &= H \hat{e}_3 \,, \label{eq: linear Matrix OMP}
\end{align}
where we have defined the $(m-1) \times 3$ matrix
\begin{equation}
H= \begin{pmatrix}
\vec{h}_{a_1 a_{2}}^\top \\
\vdots \\
\vec{h}_{a_1 a_{m-1}}^\top \\
\end{pmatrix} \,,
\end{equation}
the vector of differences of \emph{a priori} probabilities
\begin{equation}
\vec{q}= \begin{pmatrix}
q_{a_1}-q_{a_2} \\
\vdots \\
q_{a_1}-q_{a_{m-1}} \\
\end{pmatrix} \,,
\end{equation}
and the vector of the $j$-th components of differences of the Bloch vectors of the complementary states 
\begin{equation}
\vec{w}_j= \begin{pmatrix}
(\vec{s}_{a_1 a_2})_j \\
\vdots \\
(\vec{s}_{a_1 a_{m-1}})_j  \\
\end{pmatrix} 
= \begin{pmatrix}
(\vec{s}_{a_1}- \vec{s}_{a_2})_j \\
\vdots \\
(\vec{s}_{a_1}- \vec{s}_{a_{m-1}})_j \\ 
\end{pmatrix} \,,
\end{equation}
with $j=1,2,3$, while $\hat{e}_j$ denote the vectors with the $j$-th element taking the value one, and zero elsewhere. The unknown vectors $\vec{d}_j$ are the rows of the matrix $D$ and $t_i$ are the elements of the vector $\vec{t}$  in the definition of a channel, Eq.\@ \eqref{eq: qubit CPTP}. That is,
\begin{equation}
D= \begin{pmatrix}
\vec{d}_{1}^{\top} \\
\vec{d}_{2}^{\top} \\
\vec{d}_{3}^{\top} \\
\end{pmatrix} \,.
\end{equation}

Let $\mathfrak{O}$ denote the $(m-1)\times 3$ matrix with zero as entries  and $\vec{0}$ the $(m-1)$-dimensional vector with zero entries. In addition, define the $3(m-1)\times 13$ matrix $Q$
\begin{equation}
Q= \begin{pmatrix}
H & \mathfrak{O}& \mathfrak{O} & \vec{q} & \vec{0} & \vec{0} & -\vec{w}_1 \\
\mathfrak{O}& H & \mathfrak{O}  & \vec{0} & \vec{q} & \vec{0} & -\vec{w}_2 \\
\mathfrak{O}  &  \mathfrak{O}& H &  \vec{0} & \vec{0} & \vec{q}  & -\vec{w}_3 \\
\end{pmatrix} \,,
\end{equation}
and the two vectors $\vec{x}$ and $\vec{b}$ given by
\begin{equation}
\vec{x}^\top= \begin{pmatrix}
\vec{d}_1^\top & \vec{d}_2^\top & \vec{d}_3^\top & \vec{t}^\top & \delta_\N \\
\end{pmatrix} \,,
\end{equation}
and
\begin{equation}
\vec{b}^\top= \begin{pmatrix}
\hat{e}_1^\top & \hat{e}_2^\top & \hat{e}_3^\top & \vec{0}^\top & 0 \\
\end{pmatrix}  \,,
\end{equation}
where $\vec{x}$ is the vector of unknowns. Then, the set of equations \eqref{eq: linear Matrix OMP}, take the standard form
\begin{equation}
Q\vec{x} = Q \vec{b} \,. \label{eq: inhomogenous}
\end{equation}
It is obvious that a particular solution is $\vec{x}_p=\vec{b}$, which also implies that the system is not inconsistent. The general solution is then
\begin{equation}
\vec{x} = \vec{x}_p + \vec{x}_h \,, 
\end{equation}
where $\vec{x}_h$ denotes the solution to the homogeneous equation
\begin{equation}
Q\vec{x} = 0 \,.
\end{equation}
The last matrix equation represents a linear set of $3(m-1)$ equations for 13 unknowns. Thus, depending on the number of states identified by the measurement, $m$, other solutions apart from the zero solution might not exist for the homogeneous problem. If $\rank(Q)<3(m-1)$, then an infinity of solutions exist; otherwise only one solution exists (identity map). 
Recalling that for ensembles of qubit states an optimal measurement identifying at most 4 states always exists \cite{davies1978}, it follows that even ensembles with more than 4 states will have some some of their weak OMP sets potentially non-trivial, that is, will have a non empty feasible set of solutions. 
An unresolved open question concerns the existence of ensembles with all their OMP sets trivial or, in other words, ensembles for which there does not exist other admissible solutions apart from the identity map.

Going back to the Eq.\@ \eqref{eq: inhomogenous}, the general solution can also be written concisely in the form \cite{james1978}
\begin{equation}
\vec{x} = \pinv{Q }Q \vec{b} +(\id -\pinv{Q} Q) \vec{c} \,, \label{eq: general solution OMP}
\end{equation}
for any $\vec{c}\in \mathbb{R}^{13}$ and where $\pinv{Q}$ denotes the pseudo-inverse of matrix $Q$. 
As already mentioned, this feasible set of solutions contains all possible maps of the form in Eq.\@ \eqref{eq: qubit CPTP} which are OMP; however not all of them are admissible since they don't have to be CPTP from the outset.
Having obtained the full set of solutions $\vec{x}$, it remains to impose the conditions for the map to be CPTP. This will further reduce the feasible set of solutions to the admissible set, which can in principle  be the trivial set.

\subsection{Examples}

In this section we derive the feasible solutions for a number of different ensembles. Note that we here consider only their strong OMP sets. At the same time, however, some of the examples are instances of weak OMP sets for others. For instance, the one basis or two mutually unbiased bases (MUB) examples are weak OMP sets for the three MUBs one.

\subsubsection{One basis}
Let us first consider a pair of orthogonal states. Specifically, we take the eigenstates of the Pauli $\hat{Z}$, which have Bloch vectors $\vec{v}^\top_{\nicefrac{1}{2}}  =\pm(0,0,1)$, and assume that they appear with probabilities $q_1$ and $q_2$ respectively. It is easy to see that in this case the complementary states have the same Bloch vectors as the states themselves but with signs inverted. That is, $\vec{s}_{\nicefrac{1}{2}}=-\vec{v}_{\nicefrac{1}{2}}$ In addition, the matrix $H$ becomes a row matrix and the vectors $\vec{q} \equiv q, \vec{w}_j\equiv w_j $ become scalars. Specifically, we find $H=(0 \, 0\, 1)\,, q=q_1-q_2 \,, w_1=w_2 =0$ and $w_3=-2$. From these, we form the matrix $Q$ and solve Eqs.\@ \eqref{eq: inhomogenous} to obtain 
\begin{equation}
D= \begin{pmatrix}
d_{11} & d_{12} & -(q_1-q_2)t_1  \\
d_{21} & d_{22} & -(q_1-q_2)t_2 \\
d_{31} & d_{32} & 1-2\delta_\N-(q_1-q_2)t_3   \\
\end{pmatrix} \,, 
\end{equation}
where $t_j$ denote the elements of the vector $\vec{t}$. Note that the guessing degradation $\delta_\N$ is here a free parameter and each allowed value specifies a different class of potentially OMP channels. Looking for unital OMP maps only, the first two elements of the third column of matrix $D$ become 0 while the last $1-2\delta_\N$.

\subsubsection{Two MUBs}
Next, we consider two MUBs. More specifically, we consider the ensemble of states used in the Bennett-Brassard 1984 cryptographic protocol \cite{bennett2014}. The four states 
have Bloch vectors $\vec{v}_{\nicefrac{1}{2}}  ^\top =\pm(0,0,1)$ and $\vec{v}_{\nicefrac{3}{4}}^\top  =\pm(1,0,0)$, and they appear with equal \emph{a priori} probabilities $\nicefrac{1}{4}$. A direct computation gives that the matrix $H$ is given by
\begin{equation}
H= \begin{pmatrix}
0 & 0 & \nicefrac{1}{2} \\
-\nicefrac{1}{4} & 0 & \nicefrac{1}{4} \\
\nicefrac{1}{4} & 0 &\nicefrac{1}{4} \\
\end{pmatrix} \,,
\end{equation}
while $\vec{w}_1^\top =2 (0,0,-1)$, $\vec{w}_2^\top = (1,0,-1)$ and $\vec{w}_3^\top =3 (-1,0,-1)$. Since the \emph{a priori} probabilities are equal, there are no constraints for the vector $\vec{t}$ in the definition of the channel. Thus, in this case
\begin{equation}
Q= \begin{pmatrix}
	H & \mathfrak{O}& \mathfrak{O}  & -\vec{w}_1 \\
	\mathfrak{O}& H & \mathfrak{O}  & -\vec{w}_2 \\
	\mathfrak{O}  &  \mathfrak{O}& H  & -\vec{w}_3 \\
\end{pmatrix} \,,
\end{equation}
while the two vectors $\vec{x}$ and $\vec{b}$ are
\begin{equation}
\vec{x}^\top= \begin{pmatrix}
\vec{d}_1^\top & \vec{d}_2^\top & \vec{d}_3^\top & \delta_\N \\
\end{pmatrix} \,,
\end{equation}
and
\begin{equation}
\vec{b}^\top= \begin{pmatrix}
\hat{e}_1^\top & \hat{e}_2^\top & \hat{e}_3^\top  & 0 \\
\end{pmatrix}                    \,,
\end{equation}
Finding the pseudo inverse of matrix $M$ and directly substituting in Eq.\@ \eqref{eq: general solution OMP} we find
\begin{equation}
\vec{x}^\top = \left(1-4\delta_\N,y,0,0,z,0,0,w, 1-4\delta_\N, \delta_\N\right) \,,
\end{equation}
from which the matrix $D$ of the OMP channel follows directly
\begin{equation}
D= \begin{pmatrix}
1-4\delta_\N & y & 0 \\
0 & z & 0\\
0 & w & 1-4\delta_\N \\
\end{pmatrix} \,. \label{eq: BB84 OMP}
\end{equation}
The parameter $\delta_\N$ is the guessing degradation and for the BB84 ensemble it takes values in $[0,\nicefrac{1}{4})$. Eq.\@ \eqref{eq: BB84 OMP} shows the general form of the matrix $D$ of an OMP map for the BB84 ensemble. However, not all such maps are CPTP. For example, consider a unital map with $y=z=w=1-4\delta_\N$; computing the singular value decomposition (SVD) of the matrix $D$ we find that it is of the form $O_1 \Delta O_2 $, with $\Delta = \text{diag}\left(1-4\delta_\N \,, \sqrt{2-\sqrt{3}}(1-4\delta_\N), \sqrt{2+\sqrt{3}}(1-4\delta_\N)\right)$ and $O_1, O_2$ rotations. Then, the condition $(\lambda_1\pm \lambda_2)^2 \leq (1\pm \lambda_3)^2$, gives the allowed values of $0 \leq \delta_\N \leq 0.078$ or $0.146 \leq \delta_\N \leq \nicefrac{1}{4}$, for the map to be CPTP. Similarly, let us consider the case of a non-unital channel with $y=z=w=1-4\delta_\N$, $t_1=t_3=0$ and $\delta_\N=\nicefrac{3}{10}$; then, the conditions for such a map to be CPTP, Eqs.\@ \eqref{eq: CPTP condition 1} and \eqref{eq: CPTP condition 2}, give $-0.678 \leq t_2 \leq 0.678$ .

\subsubsection{Three MUBs}
For completeness we also consider the case of three MUBs: the six states that appear with equal probability $\nicefrac{1}{6}$ appear in the \emph{six-state} cryptographic protocol \cite{bruss1998,bechmann1999}. 

The Bloch vectors of the six states are given by $\vec{v}_{\nicefrac{1}{2}}^\top =\pm(0,0,1) \,, \vec{v}_{\nicefrac{3}{4}}^\top  = \pm(1,0,0)\,, \vec{v}_{\nicefrac{5}{6}}^\top  = \pm(0,1,0)$. 
In this case, one finds that the matrix $H$ is given by
\begin{equation}
H= \frac{1}{6}\begin{pmatrix}
0 & 0 & 2 \\
-1 & 0 & 1 \\
1 & 0 & 1 \\
0 & -1 & 1 \\
0 & 1 & 1 \\
\end{pmatrix} \,,
\end{equation}
while $\vec{w}_1^\top =(0,1,-1,0,0)$, $\vec{w}_2^\top = (0,0,0,1,-1)$ and $\vec{w}_3^\top = -(2,1,1,1,1)$. After forming matrix $Q$ and solving Eq.\@ \eqref{eq: inhomogenous}, we find
\begin{equation}
\vec{x}^\top = \left((1-6\delta_\N),0,0,0,(1-6\delta_\N),0,0,0, (1-6\delta_\N), \delta_\N\right) \,,
\end{equation}
from which the matrix $D$ of the OMP channel follows directly
\begin{equation}
D= \begin{pmatrix}
1-6\delta_\N & 0 & 0 \\
0 & 1-6\delta_\N & 0\\
0 & 0 & 1-6\delta_\N \\
\end{pmatrix} \,, \label{eq: 3 MUBS}
\end{equation}
and with arbitrary $\vec{t}$. This shows that the only unital OMP maps in this case are depolarizing maps. 

\subsubsection{SIC-POVM state ensemble}
Let us now consider an equiprobable ensemble that consists of states that form a SIC-POVM \cite{renes2004}. The Bloch vectors of the 4 states are given by $\vec{v}_1^\top =(0,0,1) \,, \vec{v}_2^\top  = \left(\nicefrac{2\sqrt{2}}{3},0,-\nicefrac{1}{3}\right) \,, \vec{v}_3^\top  = \left(-\nicefrac{\sqrt{2}}{3},\sqrt{\nicefrac{2}{3}},-\nicefrac{1}{3}\right) \,, \vec{v}_4^\top  = \left(-\nicefrac{\sqrt{2}}{3},-\sqrt{\nicefrac{2}{3}},-\nicefrac{1}{3}\right)$. 
In this case, one finds that the matrix $H$ is given by
\begin{equation}
H= \frac{1}{4} \begin{pmatrix}
-\nicefrac{2\sqrt{2}}{3} & 0 & \nicefrac{4}{3} \\
\nicefrac{\sqrt{2}}{3} & -\nicefrac{\sqrt{2}}{3} & \nicefrac{4}{3} \\
\nicefrac{\sqrt{2}}{3} & \nicefrac{\sqrt{2}}{3} & \nicefrac{4}{3} \\
\end{pmatrix} \,,
\end{equation}
while $\vec{w}_1^\top =\nicefrac{(2\sqrt{2},-\sqrt{2},-\sqrt{2})}{3}$, $\vec{w}_2^\top = (0,\sqrt{\nicefrac{2}{3}},-\sqrt{\nicefrac{2}{3}})$ and $\vec{w}_3^\top = -\nicefrac{(4,4,4)}{3}$. Computing the matrix $Q$, its pseudo-inverse and substituting in Eq.\@ \eqref{eq: general solution OMP}, we find
\begin{equation}
\vec{x}^\top = \left((1-4\delta_\N),0,0,0,(1-4\delta_\N),0,0,0, (1-4\delta_\N), \delta_\N\right) \,,
\end{equation}
from which the matrix $D$ of the OMP channel follows directly
\begin{equation}
D= \begin{pmatrix}
1-4\delta_\N & 0 & 0 \\
0 & 1-4\delta_\N & 0\\
0 & 0 & 1-4\delta_\N \\
\end{pmatrix} \,, \label{eq: SIC-POVM OMP}
\end{equation}
and with arbitrary $\vec{t}$. This shows that the only unital OMP maps in this case are depolarizing maps. Note the similarity with the previous example.

\subsubsection{An ensemble of unequal \emph{a priori} probabilities}
We conclude the series of examples by considering an ensemble of unequal \emph{a priori} probabilities. Specifically, consider the three states with Bloch vectors $\vec{v}^\top_1=\nicefrac{(3,0,3)}{4\sqrt{2}} \,, \vec{v}^\top_2=\nicefrac{(-3,3\sqrt{3},0)}{10} \,, \vec{v}^\top_3=\nicefrac{(-1,-\sqrt{3},0)}{2}$ that appear with probabilities $q_1=\nicefrac{1}{3}, q_2=\nicefrac{5}{12}, q_3 = \nicefrac{1}{4}$, respectively. The $H$ matrix in this case is 2x3 and is given by
\begin{equation}
H= \begin{pmatrix}
\frac{1+\sqrt{2}}{8} & -\nicefrac{\sqrt{3}}{8} & \nicefrac{1}{4\sqrt{2}} \\
\frac{1+\sqrt{2}}{8} & \nicefrac{\sqrt{3}}{8} & \nicefrac{1}{4\sqrt{2}} \\
\end{pmatrix} \,.
\end{equation}
The complementary states are found using the results in \cite{ha2013} and have Bloch vectors $\vec{s}^\top_1=(-0.796, 0.385, -0.466)$, $\vec{s}^\top_2= (0.605, -0.713, 0.354)$ and $\vec{s}^\top_3= (0.304, 0.936, 0.178)$, from which one finds $\vec{w}^\top_1=(-1.401, -1.100), \vec{w}^\top_2=(1.098, -0.551)$ and $ \vec{w}^\top_3=(-0.821, -0.644)$. Although an exact calculation is possible, we only give the numerical values to avoid cumbersome expressions. Note that since the ensemble has only 3 states, the matrix $Q$ is now a $6\times13$ matrix. The full solution for the map to be OMP has the form
\begin{equation}
D= \begin{pmatrix}
d_{11} & d_{12} & 1.707-1.707 d_{11}-7.075 \delta_\N \\
d_{21} & d_{22} & -1.707 d_{21} +1.547 \delta_\N \\
d_{31} & d_{32} & 1-1.707d_{31}-4.145 \delta_\N  \\
\end{pmatrix} \,, 
\end{equation}
and $\vec{t} = (-2.598d_{12}+1.808\delta_\N, 2.598-2.598 d_{22}-9.894\delta_\N,-2.598d_{32}+1.059\delta_\N)$\,.
Once again, although such a map is OMP, it will not in general be CPTP and the conditions in Eqs.\@ \eqref{eq: CPTP condition 1} and \eqref{eq: CPTP condition 2} need to be imposed.

\section{Discussion}

\subsection{Guessing probability preservation}
The use of a channel can not increase the guessing probability and it will, at best, preserve it. Let us now examine what are the implications of the theorem in the case of no reduction in guessing probability, that is, $\delta_\N=0$. Then, the conditions of the theorem, Eq.\@ \eqref{eq: theorem OMP iff}, reduce to
\begin{align}
q_\x \N[\rho_\x] - q_\y \N[\rho_\y] & = q_\x \rho_\x - q_\y \rho_\y  \,, \notag \\
r_\x & \geq 0 \, \,, \quad \forall \x,\y \,. \label{eq: OMP iff no loss of guessing probability}
\end{align}
In other words, for the guessing probability to be preserved, pairwise differences shall be preserved for \emph{any} pair of states in the ensemble. In the qubit case, apart from the identity map, this is only possible for ensembles of two states and certain unitary maps, as shown in Propositions \ref{prop: unitary 2} and \ref{prop: unitary 3}. It follows that a unitary map preserves the guessing probability but not the optimal measurement, in general. Thus, it is obvious that for a general qubit ensemble, a channel cannot preserve both the guessing probability and the optimal measurement at the same time.

\section{Conclusions}

We proved a necessary and sufficient condition for the preservation of an optimal measurement for the discrimination of qubit states sent over a quantum channel. Our result contains and strengthens previous ones. We discussed particular simple forms of the condition in the case of ensembles of equal \emph{a priori} probabilities, as well as ensembles of two states. In addition, we considered the properties of the OMP sets for a given ensemble, which turn out to have a convex structure.
For ensembles of equal \emph{a priori} probabilities, as well as ensembles of two states, we further characterized OMP maps and showed that the depolarization channel is always included in their OMP sets. Thus, we showed that for these two cases the OMP set can not be the trivial set containing only the identity map. 
 Finally, we discussed how can one construct the OMP sets for a given ensemble and presented a number of examples.  
 
A few interesting open problems remain. 
 The first concerns the existence of ensembles whose OMP sets are all trivial.  This would imply that such ensembles are \emph{isolated} in that any CPTP map acting on them necessarily changes all optimal measurements. We demonstrated that two-state ensembles as well as ensembles of equal \emph{a priori} probabilities are not isolated. 
 Moreover, we showed the existence of feasible solutions for any ensemble.
 We expect that no isolated ensembles exist but we have not managed to establish the result in general. 
 
 The second concerns the extension of the result beyond the qubit case.
  In higher dimensions the theorem does not provide a necessary and sufficient condition since the link between the complementary states and the measurement breaks down: there might exist ensembles with the same measurement but with different complementary states. 
 Consequently, the condition in the theorem holds but as a sufficient condition only. 
 Owing to the intricacies of the state discrimination problem beyond qubit states, the existence of a simple condition that is both necessary and sufficient, similar to the one of the theorem in this work,  seems unlikely. However, by restricting to certain types of ensembles only, for example ensembles consisting of linearly independent states, a full characterization might be possible.

\section{Acknowledgment}
This work is supported by National Research Foundation of Korea (2019M3E4A1080001, NRF2017R1E1A1A03069961), an Institute of Information and Communications Technology Promotion (IITP) grant funded by the Korean government (MSIP) (Grant No. 2019-0-00831, EQGIS) and ITRC Program(IITP-2019-2018-0-01402).

\bibliography{OMPiff_20200630_sub.bib}
\bibliographystyle{apsrev4-1}

\end{document}